\DeclareMathOperator{\real}{Re}
\DeclareMathOperator{\imag}{Im}
\renewcommand{\Re}{\real}
\renewcommand{\Im}{\imag}
\newtheorem{theorem}{Theorem}[section]
\newtheorem{proposition}[theorem]{Proposition}
\newtheorem{follow}[theorem]{Corollary}
\newtheorem{lemma}[theorem]{Lemma}
\theoremstyle{definition}
\numberwithin{equation}{section}
\newtheorem{remark}[theorem]{Remark}
\newcommand{\bel}{\begin{equation} \label}
\newcommand{\ee}{\end{equation}}
\newcommand{\x}{{\bf x}}
\newcommand{\y}{{\bf y}}
\newcommand{\rt}{{\mathbb R}^{3}}
\newcommand{\re}{{\mathbb R}}
\newcommand{\A}{{\bf a}}
\newcommand{\C}{{\mathbb C}}
\def\beq{\begin{equation}}
\def\eeq{\end{equation}}
\newcommand{\bea}{\begin{align}}
\newcommand{\eea}{\end{align}}
\newcommand{\beas}{\begin{align*}}
\newcommand{\eeas}{\end{align*}}
\begin{document}
\begin{center}
{\large \bf  Diamagnetism of quantum gases with singular potentials.}

\medskip

\today
\end{center}

\begin{center}
\small{
Philippe Briet \footnote{Universit{\'e} du sud Toulon-Var \& Centre de Physique Th\'eorique,  Luminy, Case 907
13288 Marseille CEDEX 9, France; e-mail: briet@univ-tln.fr  },
Horia D. Cornean\footnote{Department of Mathematical Sciences,
    Aalborg
    University, Fredrik Bajers Vej 7G, 9220 Aalborg, Denmark; e-mail:
    cornean@math.aau.dk},
Baptiste Savoie \footnote{Centre de Physique Th\'eorique, Luminy, Case 907,
13288 Marseille,  CEDEX 9, France, and Department of Mathematical Sciences,
    Aalborg
    University, Fredrik Bajers Vej 7G, 9220 Aalborg, Denmark; e-mail:
    baptiste.savoie@gmail.com}}.

\end{center}

\vspace{0.5cm}

\begin{abstract}
We consider a gas of quasi-free quantum particles confined to a finite box,
subjected to singular magnetic and electric fields. We
prove in great generality that
the  finite volume grand-canonical pressure is
analytic  w.r.t. the chemical potential and the intensity of the
external magnetic field. We also  discuss the
thermodynamic limit.
\end{abstract}

{\bf  AMS 2000 Mathematics Subject Classification:} 82B10, 82B2, 81V99. \\

{\bf  Keywords:} Diamagnetism, Schr\"odinger Operators, Gibbs ensembles.


\section{Introduction and the main result.}

The quest for rigorous results concerning the thermodynamic limit of
the magnetic susceptibility of a gas of quasi-free quantum particles in the presence
of a magnetic field started in 1975 with the work of Angelescu {\it
  et. al.} \cite{abn1, abn2}. Their method
 used in an
essential way that the confining domain was a parallelepiped,
the Hamiltonian was purely
magnetic and the susceptibility was computed at zero magnetic field.

In a series of papers \cite{C, BrCoLo1, BrCoLo2, BrCoLo3, CN2, Sa} we
gradually removed these constraints, and now we know how to prove the
thermodynamic limit for generalized susceptibilities at arbitrarily
large magnetic fields, and with smooth and periodic electric
potentials. This achievement was possible due to  a new idea, which
led to the development of a systematic magnetic perturbation theory
for Gibbs semigroups.

In this paper we examine the case in which both the magnetic field and
the electric potential can have singularities, such that the magnetic
and scalar singular perturbations are relatively bounded in the form
sense with respect to
the purely magnetic Schr\"odinger operator with constant magnetic
field.

There is a huge amount of literature dedicated to spectral and statistical
aspects of diamagnetism in large quantum systems. Some of the closely
related papers to our work are \cite{AC, ahs, BC, CoRo, CN, HeSj, If}.

Now let us introduce some notation and give the main theorem.
Consider a magnetic vector potential  $\A= (a_1, a_2,
a_3)=\A_c+\A_p $ where $B \A_c:= \frac{B}{2} {\bf e} \times {\bf x}, \; {\bf e}= (0,0,1)$ is the usual symmetric gauge generated by a
constant magnetic
field $  {\bf B}=   B {\bf e}, B>0 $ and $\A_p$ is $\mathbb{Z}^3$-periodic satisfying
 $\vert \A_p\vert ^2  \in {\cal K}_{loc}(\rt)$.   The notation ${\cal
   K}_{loc}$ denotes the usual Kato class \cite{CFKS,
   Si}. Relations between  these assumptions on magnetic potentials  and  related  choices of periodic magnetic fields are
   discussed  in \cite{If}. Assume that $V$ is also $\mathbb{Z}^3$-periodic such that
$V\in {\cal K}_{loc}(\rt)$. Later on we will give a rigorous sense to
the operator (here $\omega  := e B/c\in \re$)
$$H_\infty(\omega,V):=\frac{1}{2}(-i\nabla-\omega\A)^2+V$$
corresponding to the obvious quadratic form initially defined on
$C^\infty_0(\rt)$. If $\Lambda$ is a bounded open and simply connected
subset of $\rt$ we denote by $H_\Lambda(\omega,V)$ the operator
obtained by restricting the above mentioned quadratic form to
$C^\infty_0(\Lambda)$. We will see that $H_\Lambda(\omega,V)$ has
purely discrete spectrum.

Let $\omega \in \re$ and $\beta := \frac{1}{k_B T} > 0$, where $T>0$
is the temperature and $k_B$ is the Boltzmann constant. Set
$e_0=e_0(\omega)$ to be $\inf\sigma(H_\infty(\omega,V))$.
Introduce the following complex domains
 \bel{D}
D_{+1}(e_0) := {\C} \setminus (-\infty, -e^{\beta
e_0}], \quad D_{-1}(e_0) := {\C} \setminus [e^{\beta e_0},\infty)
\ee

The grand canonical finite volume pressure is defined as \cite{abn1, AC, Hu}
\begin{align}\label{P2}
{P}_{\Lambda}(\beta,\omega, z,\epsilon)
= \frac{\epsilon }{\beta \vert \Lambda \vert }
{\rm Tr}_{L^{2}(\Lambda)} \big
\{\ln( {\mathbb I} + \epsilon z  e^{-\beta H_\Lambda (\omega, V)})\big\}
\end{align}
where $\epsilon = +1$ corresponds to the Fermi case and
$\epsilon= -1$ corresponds to  the Bose case.
In \eqref{P2} the activity $z \in D_\epsilon(e_0) \cap  \re$. The
operator  $\ln( {\mathbb I}+ \epsilon z e^{-\beta H_\Lambda (\omega,
  V)}) $ in the right
hand side of  \eqref{P2}
is defined via functional calculus. Due to some trace class
estimates which we will prove later on (see \eqref{n1W}), the pressure $P_\Lambda$ in
\eqref{P2} is well defined. Define the following complex domains
\begin{equation}
\label{domainDr}
{\bf D}_{\epsilon} := \bigcap_{\omega \in \re} D_{\epsilon}(e_0(\omega))= D_{\epsilon}(e_0(0)),\quad \epsilon =\pm 1 \end{equation}

Now we can formulate  the main result  of this paper:
\begin{theorem}
\label{omanapress2}
Let  $\beta > 0$.

{\rm (i)}. For each open set ${ K}\subset \C$ with the property that $\overline{K}$ is compact and included in $ {\bf
  D}_\epsilon$, there exists a complex
neighborhood ${\cal N}$ of the real axis such that
${\cal N}\,\mathrm{x}\,{ K} \owns (\omega,z) \mapsto P_{\Lambda} (\beta, \omega,z, \epsilon)$
is  analytic.

{\rm (ii)}. Let $\omega \in \re$ and choose a compact set   $K \subset D_{\epsilon}(e_0(\omega))$. Then  uniformly  in $z \in K$
 $$P_{\infty}
(\beta, \omega,z, \epsilon):=\lim_{\Lambda\rightarrow \rt}P_{\Lambda}
(\beta, \omega,z, \epsilon)$$ exists and defines a smooth function of
$\omega$.
\end{theorem}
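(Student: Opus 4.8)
The plan is to establish analyticity by writing the pressure as a contour integral of the resolvent and then proving that the relevant trace-class operators depend analytically on the parameters. First I would give rigorous meaning to $H_\Lambda(\omega,V)$ via the KLMN theorem: since $|\A_p|^2$ and $V$ are in $\mathcal K_{loc}(\rt)$, the singular perturbations are form-bounded with relative bound zero with respect to the constant-field magnetic Laplacian restricted to $\Lambda$, so the form sum defines a self-adjoint operator bounded from below, with compact resolvent (because the constant-field Landau Hamiltonian on a bounded domain has compact resolvent and the perturbation is relatively compact in form sense). This yields the discrete spectrum claim and the trace-class estimate \eqref{n1W}, which makes \eqref{P2} meaningful. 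The key technical input is a Gibbs-semigroup bound: $e^{-\beta H_\Lambda(\omega,V)}$ is trace class with an explicit upper bound on $\mathrm{Tr}\, e^{-\beta H_\Lambda(\omega,V)}$ that is locally uniform in $\omega$, obtained via a diamagnetic/Feynman--Kac comparison with the free heat kernel on $\Lambda$.

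For part (i), the strategy is to represent
\beq
\ln({\mathbb I}+\epsilon z e^{-\beta H_\Lambda(\omega,V)}) = \frac{1}{2\pi i}\oint_{\mathcal C} \ln(1+\epsilon z e^{-\beta \lambda})\,(H_\Lambda(\omega,V)-\lambda)^{-1}\,d\lambda,
\eeq
where $\mathcal C$ is a contour encircling $\sigma(H_\Lambda(\omega,V))$ on which $1+\epsilon z e^{-\beta\lambda}$ stays away from the cut of the logarithm --- here is where the condition $\overline K\subset {\bf D}_\epsilon = D_\epsilon(e_0(0))$ enters, guaranteeing that for all real $\omega$ (hence, by continuity, for $\omega$ in a complex neighborhood $\mathcal N$) the branch cut is avoided. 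Then I would show joint analyticity in $(\omega,z)$ of the trace: analyticity in $z$ is immediate from the integrand; analyticity in $\omega$ requires extending $\omega\mapsto H_\Lambda(\omega,V)$ to a complex neighborhood as a holomorphic family of type (B) in Kato's sense --- the quadratic form $\frac12\|(-i\nabla-\omega\A)u\|^2 + \langle u, Vu\rangle$ is polynomial in $\omega$ with form-bounded coefficients (the linear term $\A\cdot(-i\nabla)$ and the $|\A|^2$ term are form-bounded with relative bound zero), so the resolvent is analytic in $\omega$ in operator norm on a tube around $\re$. Combined with the locally uniform trace-norm bound on $e^{-\beta H_\Lambda}$ (which controls the trace norm of the resolvent integrand on $\mathcal C$), Vitali/Morera plus the continuity of the trace gives joint analyticity.

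For part (ii), I would fix $\omega\in\re$ and a compact $K\subset D_\epsilon(e_0(\omega))$, expand the logarithm in its convergent series (for $|z|$ small) or directly use the contour representation, and reduce the thermodynamic limit to convergence of $\frac{1}{|\Lambda|}\mathrm{Tr}_{L^2(\Lambda)}\big(e^{-\beta H_\Lambda(\omega,V)}\big)^n$ per unit volume, i.e. to convergence of integrated heat-kernel diagonals. Using the Feynman--Kac--Itô representation for the magnetic semigroup with periodic $\A_p$ and $V$, the kernel $e^{-\beta H_\Lambda}(\x,\x)$ converges as $\Lambda\uparrow\rt$ to the infinite-volume kernel $e^{-\beta H_\infty(\omega,V)}(\x,\x)$ for a.e. $\x$, with a domination that survives integration over a fundamental cell; Dirichlet-to-whole-space comparison (the bracketing/monotonicity of Dirichlet semigroups) controls boundary effects, and periodicity makes the cell average independent of the cell. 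Smoothness in $\omega$ of the limit follows by passing the $\omega$-derivatives (justified by the same holomorphic-family-of-type-(B) structure, now for $H_\infty$) through the limit, using that the bounds are locally uniform in $\omega$. The main obstacle I anticipate is the last point: controlling the $\Lambda\uparrow\rt$ convergence and the $\omega$-derivatives \emph{simultaneously and uniformly}, since the singular Kato-class potentials prevent naive kernel estimates --- this is exactly where the systematic magnetic perturbation theory for Gibbs semigroups developed in \cite{C, BrCoLo1, BrCoLo2, BrCoLo3, CN2, Sa} must be adapted to the form-bounded singular setting, providing trace-norm estimates on magnetic derivatives of the semigroup that are uniform in the volume.
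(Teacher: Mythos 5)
Your plan for part (i) follows the same architecture as the paper: define $H_\Lambda(\omega,V)$ by the KLMN theorem, extend $\omega\mapsto H_\Lambda(\omega,V)$ to a holomorphic family of type (B), establish trace/Hilbert--Schmidt bounds on the Gibbs semigroup via a diamagnetic comparison, represent $\ln(\mathbb I+\epsilon z W_\Lambda)$ by a Dunford contour integral, and conclude joint analyticity. There is, however, a real gap in the step where you pass from operator-norm analyticity of the resolvent to analyticity of the trace. You claim the semigroup bound ``controls the trace norm of the resolvent integrand on $\mathcal C$,'' but $(H_\Lambda(\omega,V)-\lambda)^{-1}$ is only Hilbert--Schmidt in $3$ dimensions, not trace class, so the integrand itself is not in $B_1$, and a Gibbs-semigroup bound does not directly transfer to the resolvent on a contour. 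The paper's key device is to insert $(\xi-\xi_0)^2$ and factor out a fixed $(H_\Lambda(\omega,V)-\xi_0)^{-2}$ (which is in $B_1$ and depends $B_1$-analytically on $\omega$, their Corollary \ref{coro}), leaving a bounded-operator-valued analytic integral; the product is then $B_1$-analytic and its trace is analytic. Something of this sort, or an equivalent regularization, is needed to close your argument; as stated, invoking ``Vitali/Morera plus continuity of the trace'' does not justify that the trace of the contour integral is analytic, since the trace is not continuous in operator norm.

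For part (ii) you take a genuinely different route. The paper reduces the thermodynamic limit of the pressure to the existence of the integrated density of states for $H_\infty(\omega,V)$ (cited from \cite{DoIwMi,If}) via the identity $P_\infty=-\frac{\epsilon}{\beta}\int f'_\xi\,\rho\,d\xi$ followed by the spectral theorem, whereas you propose direct convergence of integrated heat-kernel diagonals using the Feynman--Kac--It\^o formula and Dirichlet bracketing. Since the Laplace transform of $d\rho$ is exactly $\lim_{L}|\Lambda_L|^{-1}\mathrm{Tr}\,e^{-tH_{\Lambda_L}}$, the two approaches are ultimately equivalent; the paper's has the advantage of outsourcing the hard uniqueness/existence questions for the IDS to the literature, while yours requires controlling boundary effects and Kato-singular kernel estimates directly. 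Both you and the paper correctly identify that the smoothness of $P_\infty$ in $\omega$ is the delicate point, which both defer to the magnetic perturbation theory for Gibbs semigroups (the paper to \cite{Sa}). Your outline of part (ii) is thus a legitimate alternative modulo those deferred details, but part (i) needs the trace-class regularization spelled out to be a proof.
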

In  Theorem \ref{omanapress2} $(ii)$,  we take the large volume limit  in the sense defined in the  section 3.4 below.

The rest of the paper contains the proof of this theorem. While {\it (i)}
 and the first part of {\it (ii)} will be proved quite in detail, we only   outline the main ideas leading to the smoothness of $ P_\infty$, all details will be given in \cite{Sa}.


\section{Technical preliminaries.}
\label{prelim}

Define
the sesquilinear non-negative form on $C^\infty_0(\Lambda)$ given by
(here $\omega\in \mathbb{R}$):
$$ q_0(\varphi,\psi):=  \frac{1}{2}\langle (-i\nabla - \omega \A)\varphi,
(-i\nabla - \omega \A)\psi \rangle$$
By closing this form we generate a self-adjoint operator denoted by $
H_\Lambda (\omega,0)$, whose form core is $ {C}_0^\infty
(\Lambda)$, see e.g. \cite{BrHuLe,Si}. For convenience we represent this
operator as $H_\Lambda (\omega,0)=\frac{1}{2}(-i\nabla - \omega
\A)^2$.
If $\Lambda=\rt$, the corresponding free   magnetic operator is denoted
by
$ H_\infty(\omega, 0)$. If $\A$ is smooth enough, then $H_\Lambda
(\omega,0)$ can be seen as the Friedrichs extension of $
H_\infty(\omega, 0)$ restricted to $ {C}_0^\infty
(\Lambda)$.

The operator $ H_\Lambda (\omega,0)$ obeys the diamagnetic inequality
\cite{ahs,Si},
\bel{dia}
 \forall \varphi \in L^2(\Lambda),\,\,\forall \beta \geq 0, \quad
\vert e^{ -\beta H_\Lambda (\omega,0)} \varphi  \vert  \leq
e^{ -\beta H_\Lambda (0,0)}   \vert \varphi \vert
 \ee
We will work with electric potentials $V\in {\cal K}_{loc}(\rt)$ which
are $\mathbb{Z}^3$ periodic. We denote the restriction of $V$
to $\Lambda$ by the same symbol.

 It is  known that  $V$ is infinitesimally form bounded to
$H_\Lambda(0,0)$ \cite{CFKS}, and implicitly to $H_\Lambda(\omega, 0)$;
the last statement follows  by  using standard arguments involving the
diamagnetic inequality \eqref{dia} (see \cite{BrHuLe} and references herein).
We conclude that the closure of the sesquilinear form defined on
${C}_0^\infty (\Lambda)$ and given by
 $$ q_V(\varphi,\psi):=  \frac{1}{2} \langle (-i\nabla - \omega\A)\varphi,(-i\nabla
 - \omega \A)\psi \rangle + \langle V\varphi, \psi \rangle  $$
will be symmetric, bounded from below and with the domain
$ Q(q_V) = Q(q_0)$. We denote  by $ H_\Lambda(\omega,V)$ its
associated selfadjoint operator in $ L^2(\Lambda)$.

The diamagnetic inequality  \eqref{dia} holds true if we replace the
free operators by  the perturbed one
$ H_\Lambda(\omega,V)$ and $ H_\Lambda(0,V)$, see e.g. \cite{HuSi}.
This together with the  min-max principle  \cite{RS4} imply:
$$ E_0(\omega) := \inf \sigma (H_\Lambda (\omega, V))\geq e_0(\omega):= \inf \sigma (H_\infty(\omega, V)) \geq  e_0(0)> -\infty.$$

\begin{remark}
The operators  $H_\infty(\omega,0)$ and   $H_\Lambda(\omega,0)$  can
be defined under weaker conditions  on $\A$ see e.g. \cite{BrHuLe, If, HG, Si} but
the  one   imposed here will be necessary in what follows.
When we work with a bounded $\Lambda $, the form domain of
$H_\Lambda(\omega,V)$ will be $\mathcal{H}_0^1(\Lambda)$, independent of $\omega$
and $V$. If $\Lambda=\rt$,
then under our conditions on $V$ and $\A$ the operator $H_\infty(\omega,V)$ is
selfadjoint and  bounded from below  having
$C_0^\infty (\rt)$ as a form core.
\end{remark}

In the rest of the section we only consider the operator defined on
the finite box. We allow $\omega  \in \C$ and want to study the analyticity
properties of the family
$\{ H_{\Lambda}(\omega,V), \omega \in \C \}$.

\begin{proposition} Under conditions given above then $\{ H_{\Lambda}(\omega,V), \omega \in \C \}$ is a type (B) entire family of operators. In particular  $ H_{\Lambda}(\omega,V), \omega \in \C $ are sectorial operators with sector:
\begin{equation} \label{spct}
{\cal S} (\omega):= \{ \xi  \in \C,  \,\,   \vert \Im \xi  \vert   \leq  \vert \omega_1 \vert (c_{1} \Re \xi  + c_{2}  ), \,\, \Re  \xi \in [c_{3} , +\infty) \}
\end{equation}
 where the  constants $c_1,c_2, c_3$ satisfy: $0 < c_{1}, c_{2}< \infty$ and $- \infty < - c_{2}/c_{1} < c_{3} < e_{0}$.
\end{proposition}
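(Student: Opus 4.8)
The plan is to realise $\{H_\Lambda(\omega,V)\}_{\omega\in\C}$ as the family of $m$-sectorial operators attached to a holomorphic family of closed sectorial forms, and then to read off the sector from form estimates. First I would introduce, for $\omega\in\C$ and $\varphi,\psi\in C_0^\infty(\Lambda)$, the sesquilinear form
\[
t_\omega(\varphi,\psi):=\tfrac12\big\langle(-i\nabla-\bar\omega\,\A)\varphi,\,(-i\nabla-\omega\,\A)\psi\big\rangle+\langle V\varphi,\psi\rangle .
\]
The conjugate $\bar\omega$ in the left slot is precisely what makes $t_\omega$ holomorphic: expanding the inner product, $t_\omega=t_0+\omega\,\ell+\omega^{2}\,\tau+\langle V\,\cdot\,,\,\cdot\,\rangle$, where $t_0(\varphi,\psi)=\tfrac12\langle\nabla\varphi,\nabla\psi\rangle$ is the free Dirichlet form, $\ell$ is the first--order cross term (linear in $\A$ and in $\nabla$, and Hermitian), and $\tau(\varphi,\psi)=\tfrac12\langle\A\varphi,\A\psi\rangle$; all four coefficient forms are independent of $\omega$. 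For real $\omega$ this is $q_V$, so the closure of $t_\omega$ generates $H_\Lambda(\omega,V)$, and for each fixed $\varphi\in\mathcal H_0^1(\Lambda)$ the map $\omega\mapsto t_\omega(\varphi,\varphi)$ is entire.

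The second step is to show $\ell,\tau,\langle V\,\cdot\,,\,\cdot\,\rangle$ are infinitesimally form--bounded with respect to $t_0$ on $\mathcal H_0^1(\Lambda)$. Since $\Lambda$ is bounded, $|\A_c|^2$ is bounded on $\Lambda$, while $|\A_p|^2\in\mathcal K_{loc}(\rt)$ and $V\in\mathcal K_{loc}(\rt)$ are infinitesimally form--bounded relative to $-\Delta$ by the Kato--class property (cf.\ \cite{CFKS,Si}); hence $\|\A\varphi\|^2=\langle|\A|^2\varphi,\varphi\rangle\le\epsilon\|\nabla\varphi\|^2+C_\epsilon\|\varphi\|^2$ and $|\langle V\varphi,\varphi\rangle|\le\epsilon\|\nabla\varphi\|^2+C'_\epsilon\|\varphi\|^2$ for every $\epsilon>0$, and $|\ell(\varphi,\varphi)|\le\|\nabla\varphi\|\,\|\A\varphi\|$ by Cauchy--Schwarz is then likewise controlled. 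By the standard stability theorem for sectorial closed forms under relatively small perturbations, $t_\omega$ is, for every $\omega\in\C$, sectorial and closed on the $\omega$--independent domain $\mathcal H_0^1(\Lambda)$, with $\Re t_\omega$ comparable to $t_0+\mathrm{const}$. Combined with the polynomial dependence on $\omega$, this says precisely that $\{t_\omega\}$ is a holomorphic family of forms of type (a), hence $\{H_\Lambda(\omega,V)\}_{\omega\in\C}$ is a type (B) entire family of $m$-sectorial operators (see \cite{RS4} and Kato's monograph on perturbation theory, Ch.\ VI--VII).

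For the explicit sector I would fix a unit vector $\varphi\in\mathcal H_0^1(\Lambda)$ and abbreviate $g:=\tfrac12\|\nabla\varphi\|^2$, $A:=\|\A\varphi\|^2$, $b:=\ell(\varphi,\varphi)\in\re$, $v:=\langle V\varphi,\varphi\rangle\in\re$, so that $t_\omega(\varphi,\varphi)=g+\omega b+\tfrac{\omega^2}{2}A+v$. Writing $\omega=\omega_0+i\omega_1$,
\[
\Re t_\omega(\varphi,\varphi)=g+\omega_0 b+\tfrac{\omega_0^2-\omega_1^2}{2}A+v=t_{\omega_0}(\varphi,\varphi)-\tfrac{\omega_1^2}{2}A,\qquad \Im t_\omega(\varphi,\varphi)=\omega_1\big(b+\omega_0 A\big),
\]
the explicit $\omega_1$ factor in $\Im t_\omega$ being the origin of the $|\omega_1|$ slope in $\mathcal S(\omega)$. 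Three ingredients then close the estimate: (i) Cauchy--Schwarz, $b^2\le 2gA$; (ii) the lower bound $t_s(\varphi,\varphi)\ge E_0(s)\ge e_0(0)$ valid for all real $s$, established in Section~\ref{prelim} and applied at $s=\omega_0$ --- which together with (i) gives $|b+\omega_0A|\le\sqrt{2A}\,\sqrt{t_{\omega_0}(\varphi,\varphi)-e_0(0)}$; and (iii) $A\le 2\epsilon g+C_\epsilon$, used to absorb the bare $A$ into a small multiple of $\Re t_\omega(\varphi,\varphi)$ plus a constant. Feeding (ii)--(iii) into the first display yields $\Re t_\omega(\varphi,\varphi)\ge c_3$ for some $c_3<e_0$, and feeding them into $\Im t_\omega$ yields $|\Im t_\omega(\varphi,\varphi)|\le|\omega_1|\,(c_1\Re t_\omega(\varphi,\varphi)+c_2)$ with finite $c_1,c_2$ that can be arranged to satisfy $-c_2/c_1<c_3$. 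Since $H_\Lambda(\omega,V)$ is the $m$-sectorial operator of the closed sectorial form $t_\omega$, and its numerical range is the closure of $\{t_\omega(\varphi,\varphi):\|\varphi\|=1\}$, this places the numerical range, and hence $\sigma(H_\Lambda(\omega,V))$, inside $\mathcal S(\omega)$.

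The main obstacle is the bookkeeping in this last step --- balancing $b^2\le 2gA$, $A\le\epsilon g+C_\epsilon$ and $t_{\omega_0}(\varphi,\varphi)\ge e_0(0)$ precisely enough to obtain a region of exactly the stated shape with $-c_2/c_1<c_3<e_0$. One should also keep in mind that $\|\A\varphi\|$ is not uniformly bounded over unit $\varphi\in\mathcal H_0^1(\Lambda)$, so the constants (in particular $c_3$) generally deteriorate as $|\Im\omega|\to\infty$; they can however be chosen uniformly for $\omega$ in any bounded subset of $\C$, which is all that is used in Theorem~\ref{omanapress2}. The type (B) assertion of the first two paragraphs, by contrast, is essentially just a verification of Kato's hypotheses once the holomorphic form $t_\omega$ has been written down.
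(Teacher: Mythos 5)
Your proof is correct and follows the same overall strategy as the paper: exhibit $\{H_\Lambda(\omega,V)\}_{\omega\in\C}$ as the $m$-sectorial family attached to a holomorphic family of closed sectorial forms with $\omega$-independent domain, then estimate the numerical range of $t_\omega$ by splitting it into real and imaginary parts. The computations $\Re t_\omega(\varphi,\varphi)=t_{\omega_0}(\varphi,\varphi)-\tfrac{\omega_1^2}{2}\|\A\varphi\|^2$ and $\Im t_\omega(\varphi,\varphi)=\omega_1\,\Re\langle\A\varphi,(i\nabla+\omega_0\A)\varphi\rangle$ are exactly the paper's \eqref{numr}--\eqref{numi}.

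Where you diverge technically is in two places, both of which still work. First, you establish the relative form bounds of the coefficient forms $\ell,\tau,\langle V\cdot,\cdot\rangle$ against the bare Dirichlet form $t_0$ and expand $t_\omega$ in powers of $\omega$ around $0$ with fixed Hermitian coefficients (made manifestly holomorphic by the $\bar\omega$ conjugation in the left slot), whereas the paper decomposes $H_\Lambda(\omega,V)=H_\Lambda(\omega_0,V)+d\omega\,r_{1,\Lambda}(\omega_0)+d\omega^2 r_{2,\Lambda}$ around a real $\omega_0$ and measures the perturbations against $H_\Lambda(\omega_0,V)$ itself (its \eqref{relb0}--\eqref{relb2}). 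Second, to bound $|\Im t_\omega|$ the paper applies the direct linear estimate \eqref{relb2} (Cauchy--Schwarz on $\A\varphi$ versus $(i\nabla+\omega_0\A)\varphi$, followed by \eqref{relb0}); you instead use a slightly sharper completing-the-square argument: minimizing $t_s(\varphi,\varphi)=\tfrac{s^2}{2}A+sb+g+v$ over real $s$ and using $t_s\ge e_0(0)$ yields $(b+\omega_0 A)^2\le 2A\,(t_{\omega_0}(\varphi,\varphi)-e_0(0))$, which is then converted to a linear bound by Young's inequality and the relative form bound on $A$. Note that this last inequality does not follow from $b^2\le 2gA$ together with $t_{\omega_0}\ge e_0(0)$ alone, as your phrasing suggests; the lower bound must be invoked at the minimizing value $s=-b/A$ (with the case $A=0$, hence $b=0$, handled trivially). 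Both routes yield the same sector $\mathcal{S}(\omega)$ with $\omega$-dependent constants degenerating as $|\Im\omega|$ grows, consistent with the paper's Remark~\ref{rem1}.
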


\proof Denote by $T$ either $ V$
or $\A^2$. Let  $ \omega_0 \in \re$ and $\varphi \in Q(q_V), \Vert
\varphi\Vert =1$. Then we know that for all $\sigma>0$ there
exists $ \sigma'$  independent of $ \omega_0$ such that
\bel{relb0}
\vert \langle T \varphi, \varphi\rangle   \vert \leq
 \sigma  \langle H_\Lambda(\omega_0,V)\varphi,\varphi\rangle
+ \sigma'
\ee
Let us show that the following two sesquilinear forms
\begin{eqnarray} \label{def-r1-r2}
{r}_{1,\Lambda}(\omega_0) :=  \Re \{ \A\cdot(i\nabla+\omega_0
\A)\}, \quad
{r}_{2,\Lambda} :=  \label{op R}
\frac{1}{2}\;\A^2\,
\end{eqnarray}
are infinitesimally form bounded relatively to the form corresponding
to $H_\Lambda(\omega_0,V)$.

 Let $ \omega_0 \in \re$ and let $\varphi \in Q(q_V), \Vert
 \varphi\Vert =1$.  The Cauchy-Schwarz inequality implies that for any
 $\alpha>0$ we have:
\begin{align}\label{relb1}
\vert \langle \A \varphi, ( i\nabla + \omega_0 \A) \varphi  \rangle
\vert \leq
  \alpha \langle H_\Lambda(\omega_0,0)\varphi,\varphi\rangle +
\alpha^{-1} (\A^2\varphi, \varphi)\nonumber \\
 \leq \alpha  \langle H_\Lambda(\omega_0,V)\varphi,\varphi\rangle +
\alpha \vert  \langle V\varphi, \varphi\rangle \vert
+ \alpha^{-1}  \langle \A^2\varphi, \varphi\rangle
 \end{align}
In view of  \eqref{relb0}, then for all $\vartheta>0$ there exists
$\vartheta' >0$ both $\omega$-independent   such that

\bel{relb2}
\vert \langle \A \varphi, ( i\nabla + \omega_0 \A) \varphi \rangle
\vert  \leq  \vartheta \langle
H_\Lambda(\omega_0,V)\varphi,\varphi\rangle +  \vartheta'
 \ee
This implies that the form ${r}_{1, \Lambda}(\omega_0)$ is bounded
when restricted to the form domain of $H_{\Lambda}(\omega_0,V)$ and
moreover, it generates an operator with  zero relative form bound.
This property also holds for the form ${r}_{2,\Lambda}$.

Now if $\omega \in\C$, denote  $d \omega:= \omega -\omega_0$
and observe that we have in  the form sense:
\bel {def-r} {r}_{\Lambda}(\omega_0,\omega):= d\omega\;
{r}_{1,\Lambda}(\omega_0)  + d\omega^2 {r}_{2,\Lambda},\quad
H_{\Lambda}
(\omega,V)= H_{\Lambda}(\omega_0,V)+{r}_{\Lambda}(\omega_0,\omega)\ee
We conclude  that the form domain of $H_\Lambda(\omega,V)$ is
independent of $\omega$: ${\cal Q}(H_\Lambda(\omega,V))= {\cal
  Q}(H_\Lambda(0,V))$. Notice that  \eqref{def-r}  can be extended for $\omega_0, \omega \in \C$. We will now show that
$\{ H_{\Lambda}(\omega,V), \omega \in \C \}$ is a family of $m$-sectorial operators. Both properties ensure that
$\{ H_{\Lambda}(\omega,V), \omega \in \C \}$ is an analytic family of type (B) (see e.g. \cite{K}).\\

Fix $ \omega \in \C$ with $\Re \omega= \omega_0$, $\Im \omega = \omega_1$ and let
 $\varphi  \in {\cal Q}(H_\Lambda(\omega_0,V)), \; \Vert \varphi \Vert
 =1$.  Using  \eqref{relb0}, we conclude that for all $\sigma>0$ small
 enough such that $\sigma \omega_1^2 \leq 1   $  there exists $
 \sigma'$ such that:
\bel {numr} \Re \langle H_{\Lambda}(\omega,V)  \varphi ,  \varphi \rangle =
 \langle H_{\Lambda}(\omega_{0},V)  \varphi , \varphi \rangle -
\frac{\omega_1^2}{2} \langle \A ^2  \varphi, \varphi \rangle
\geq
(1-   \sigma \omega_1^2/2) \langle H_{\Lambda}(\omega_{0},V)
\varphi,  \varphi \rangle - \frac{\omega_1^2}{2} \sigma'
 \ee
On the other hand,
 from (\ref{relb2}) we conclude that there exist
two constants $\vartheta, \vartheta' > 0$   such that
\begin{equation}\label{numi}
\vert \Im  \langle H_{\Lambda}(\omega,V)  \varphi ,  \varphi \rangle
 \vert =  \vert \omega_1 \Re \langle \A \varphi,
( i\nabla + \omega_0 \A) \varphi \rangle  \vert \leq \vert
\omega_1 \vert  (\vartheta \langle
H_{\Lambda}(\omega_{0},V)  \varphi,  \varphi \rangle + \vartheta' )
\end{equation}
Let  $\Theta(H_{\Lambda}(\omega,V))$ be the numerical range of
$H_{\Lambda}(\omega,V)$.  Then from \eqref{numr} and \eqref{numi} we
obtain that both $\Theta(H_{\Lambda}(\omega,V))$ and
$ \sigma( H_{\Lambda}(\omega,V)) $ are included in the  sector \eqref{spct}.
\qed

\begin{remark} \label{rem1} {\it (i)}. Note that $c_{1}, c_{2}, c_{3}$
in \eqref{spct} depend implicitly on $\omega$ through the condition
$\sigma \omega_1^2 \leq 1 $.  If $\omega_1$ is small enough, then
these constants can be chosen to be $\omega$ independent. Moreover,
let $\omega_0 \in \re$ and $\omega \in \C$ such that
$\vert d \omega\vert $ is small enough. Then  for all
$\varphi \in {\cal Q}(H_{\Lambda}(\omega,V))$, $\Vert \varphi \Vert=1$
$$ \Re \langle H_{\Lambda}(\omega,V)  \varphi ,  \varphi \rangle \geq  c_{3} \geq e_0(\omega_0 ) + \mathcal{O}( \vert d\omega\vert).$$
\medskip
{\it (ii)}. Let $\omega \in \C$.  From \cite{K} we know that if $\xi \notin {\cal S}(\omega)$,
$\Vert (H_{\Lambda}(\omega,V) - \xi)^{-1} \Vert \leq \frac{1}{d(\xi,{{\cal S}})} .$
Hence  put
$\gamma(\omega)= \gamma := - c_{2}/c_{1}$ and $\theta(\omega) = \theta :=  \arctan (c \vert \omega_1\vert)$.  For any  $\delta > 0$ introduce the sector
\begin{equation} \label{sect}
{\cal S}_\delta(\omega):= \{ \xi  \in \C,  \vert \arg( \xi -\gamma) \vert \leq \theta + \delta \}
\end{equation}
Then there exists a constant $c_\delta> 0$ such that for all
$\xi \notin {\cal S}_\delta(\omega)$ we have
\bel{majres}
\Vert (H_{\Lambda}(\omega,V) - \xi)^{-1} \Vert \leq \frac{c_\delta}{\vert \xi - \gamma\vert}.
\ee
{\it (iii)}. The operator $H_\Lambda(0,V)$ has compact resolvent
(see e.g. \cite{GT,RS2}). By standard arguments this also holds true  for
$ H_{\Lambda}(\omega,V),\, \omega \in \C $ \cite{K}.
Hence $ H_\Lambda(\omega,V)$  has only discrete spectrum.
\end{remark}

\vspace{0.5cm}

We are now interested in establishing Hilbert-Schmidt and  trace norm
estimates for powers of the resolvent at finite volume.
Denote by $B_1$ the set of trace norm operators, and by $B_2$
the set of Hilbert-Schmidt operators defined on $L^2(\Lambda)$.
We denote by $\Vert  T \Vert_1$ and $ \Vert T\Vert_2 $ the   trace norm,
respectively  the Hilbert-Schmidt norm of the operator $T$.

For $\beta > 0$ and $\omega \in \re$, let $$ W_\Lambda(\beta, \omega)= W_\Lambda(\beta, \omega,V) :=
e^{-\beta H_\Lambda(\omega,V) }$$ be
the strongly continuous semigroup associated to $H_\Lambda (\omega,V)$
on $L^2(\Lambda)$
see for example  \cite {K, Z} for the definition  and  general properties of
a semigroup.

\begin{lemma} \label{ n12W} There exist two positive constants
$c_0$ and $C_0$ such that for
every $\beta >0$ and $\omega \in \re$ we have that $ W_\Lambda(\beta,
\omega) $ is a positive trace class operators obeying:
 \begin{equation} \label{n1W}
 \Vert W_\Lambda(\beta, \omega) \Vert_1= \rm{Tr}_{L^{2}(\Lambda)}\{W_\Lambda(\beta, \omega)\}  \leq c_0 \beta^{-3/2}e^{C_0 \beta} \vert \Lambda \vert.
\end{equation}
Moreover, its Hilbert-Schmidt   norm satisfies
\begin{equation} \label{n2W}
 \Vert W_\Lambda(\beta, \omega) \Vert_2   \leq c_0 \beta^{-3/4}e^{C_0 \beta} \vert \Lambda \vert^{\frac{1}{2}}.
\end{equation}
\end{lemma}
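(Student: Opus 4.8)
\emph{Proof strategy.} The plan is to reduce both estimates to a pointwise bound on the integral kernel of $W_\Lambda(\beta,\omega,V)$, obtained by stripping the problem down in three stages. \emph{Step 1 (remove the magnetic potential).} As noted before the statement, the diamagnetic inequality \eqref{dia} also holds for the perturbed pair, i.e. $|e^{-\beta H_\Lambda(\omega,V)}\varphi|\le e^{-\beta H_\Lambda(0,V)}|\varphi|$ for all $\varphi\in L^2(\Lambda)$ and $\beta\ge 0$ (see \cite{ahs,HuSi,Si}); since $e^{-\beta H_\Lambda(0,V)}$ is positivity preserving with a (jointly continuous) kernel, this is equivalent to $|W_\Lambda(\beta,\omega,V)(x,y)|\le W_\Lambda(\beta,0,V)(x,y)$ with $W_\Lambda(\beta,0,V)(x,y)\ge 0$. \emph{Step 2 (remove the box).} The operator $H_\Lambda(0,V)$ is the Dirichlet realization of $-\tfrac12\Delta+V$ on the open set $\Lambda$; by the Feynman--Kac formula its kernel equals that of $W_\infty(\beta,0,V):=e^{-\beta H_\infty(0,V)}$ with the path expectation additionally restricted to Brownian bridges staying inside $\Lambda$, so since the Feynman--Kac weight $\exp(-\int_0^\beta V(b_s)\,ds)$ is nonnegative one gets $0\le W_\Lambda(\beta,0,V)(x,y)\le W_\infty(\beta,0,V)(x,y)$. \emph{Step 3 (Kato-class bound on $\rt$).} Since $V$ is $\mathbb{Z}^3$-periodic and in ${\cal K}_{loc}(\rt)$ it lies in the global Kato class, so the classical Aizenman--Simon / Simon bound for Kato-class Schr\"odinger semigroups (see \cite{Si,CFKS}) furnishes constants $c_0,C_0>0$, independent of $x$ and $\beta$, with $0\le W_\infty(\beta,0,V)(x,x)\le c_0\,\beta^{-3/2}e^{C_0\beta}$ (equivalently $\|W_\infty(\beta,0,V)\|_{1\to\infty}\le c_0\,\beta^{-3/2}e^{C_0\beta}$).

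\emph{Step 4 (assemble the two norms).} Combining Steps 1--3 with the Chapman--Kolmogorov identity and symmetry of the kernel,
$$\|W_\Lambda(\tfrac\beta2,\omega,V)\|_2^2=\int_\Lambda\!\!\int_\Lambda|W_\Lambda(\tfrac\beta2,\omega,V)(x,y)|^2\,dx\,dy\le\int_\Lambda\!\!\int_{\rt}W_\infty(\tfrac\beta2,0,V)(x,y)^2\,dy\,dx=\int_\Lambda W_\infty(\beta,0,V)(x,x)\,dx\le c_0\,\beta^{-3/2}e^{C_0\beta}|\Lambda|,$$
so $W_\Lambda(\tfrac\beta2,\omega,V)\in B_2$. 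Hence $W_\Lambda(\beta,\omega,V)=W_\Lambda(\tfrac\beta2,\omega,V)^2$ is a positive trace-class operator with $\|W_\Lambda(\beta,\omega,V)\|_1=\|W_\Lambda(\tfrac\beta2,\omega,V)\|_2^2$, which gives \eqref{n1W} (and also $\Tr_{L^2(\Lambda)}W_\Lambda(\beta,\omega,V)=\int_\Lambda W_\Lambda(\beta,\omega,V)(x,x)\,dx$). For \eqref{n2W}, self-adjointness and the semigroup property give $\|W_\Lambda(\beta,\omega,V)\|_2^2=\Tr[W_\Lambda(\beta,\omega,V)^2]=\Tr[W_\Lambda(2\beta,\omega,V)]=\|W_\Lambda(2\beta,\omega,V)\|_1\le c_0\,(2\beta)^{-3/2}e^{2C_0\beta}|\Lambda|$, and taking square roots, after enlarging $c_0$ to absorb $2^{-3/4}\sqrt{c_0}$, yields \eqref{n2W}.

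The measure-theoretic identifications of operators with kernels, the positivity-preserving property behind Steps 1--2, and the elementary fact that the square of a Hilbert--Schmidt operator is trace class with $\|\cdot\|_1=\|\cdot\|_2^2$ are all routine. The genuine input here --- and the step I expect to be the main obstacle to pin down rigorously --- is Step 3: invoking (or, for a self-contained treatment, reproving via Khas'minskii's lemma and iteration) the uniform Kato-class heat-kernel bound on $\rt$, and in particular checking that $\mathbb{Z}^3$-periodicity upgrades ${\cal K}_{loc}(\rt)$ to the global Kato class, so that $c_0$ and $C_0$ are genuinely independent of $x$, $\beta$ and $\Lambda$.
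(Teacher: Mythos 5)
Your proof is correct, and the essential inputs are the same as the paper's: the diamagnetic inequality in $\omega$, Dirichlet domain monotonicity, and a Kato-class heat-kernel bound on $\rt$. The route, however, is genuinely different in the final step. The paper cites from \cite{BrHuLe} a single pointwise Gaussian upper bound
$$\vert G_\Lambda(\x,\y,\beta,\omega)\vert \le c_0\,\beta^{-3/2}e^{C_0\beta}e^{-\vert\x-\y\vert^2/(4\beta)},$$
uniform in $\omega\in\re$, which already encodes off-diagonal decay; from it, \eqref{n1W} follows by integrating the diagonal over $\Lambda$, and \eqref{n2W} by squaring the Gaussian and integrating in $\y$ over $\rt$ (which costs the factor $\beta^{3/4}$). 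You, by contrast, only extract a \emph{diagonal} bound $W_\infty(\beta,0,V)(x,x)\le c_0\,\beta^{-3/2}e^{C_0\beta}$, and then manufacture the off-diagonal information from the semigroup structure: Chapman--Kolmogorov converts $\int_{\rt}W_\infty(\beta/2,0,V)(x,y)^2\,dy$ into $W_\infty(\beta,0,V)(x,x)$, and self-adjointness yields $\Vert T^2\Vert_1=\Vert T\Vert_2^2$, so the trace bound at time $\beta$ comes from the Hilbert--Schmidt bound at time $\beta/2$ and vice versa. Your version buys a more elementary, more self-contained argument (no appeal to the refined jointly-continuous Gaussian kernel bound) at the cost of a $\beta\mapsto\beta/2,\,2\beta$ reshuffle of constants; the paper's version is shorter because the strong kernel bound is taken off the shelf. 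Both yield the same $\beta$, $\omega$ and $\vert\Lambda\vert$ dependence. The one point worth being slightly more careful about in your write-up is that your Chapman--Kolmogorov identity implicitly uses that $W_\infty(\beta/2,0,V)(x,y)$ is real and symmetric (true here, since $V$ is real and there is no magnetic term after Step~1); and your Step~3 reduction from $\mathcal{K}_{loc}$ to global Kato via $\mathbb{Z}^3$-periodicity is correct and should be stated as you did.
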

\begin{proof}
From \cite{BrHuLe} we know that the semigroup is an
integral operator:
$$ (W_\Lambda (\beta, \omega)\varphi)(\x)= \int_\Lambda  G_\Lambda (\x,\y,\beta, \omega)\varphi(\y)d\y, \quad   \varphi \in L^{2}( \Lambda).$$
Moreover the integral kernel $ G_\Lambda$ is   jointly continuous in
$(\x, \y, \beta) \in \Lambda \times \Lambda \times \re_+^*$ and satisfies
\begin{equation} \label{Pnsm}
 \vert G_\Lambda (\x,\y,\beta, \omega) \vert  \leq c_0 \beta ^{-3/2}
e^{C_0\beta } e^{\frac{-\vert \x-\y \vert^2} {4\beta}}, \;
(\x,\y, \beta, \omega) \in \Lambda \times \Lambda \times \re_+^* \times \re
\end{equation}
for some positive constants $c_0,C_0$  which only depend on the
potential
$V$. This result comes from the monotonicity property  of the semigroup and some generalized diamagnetic estimate  \cite{BrLeMu,Si}.

The proof of the lemma follows easily from \eqref{Pnsm}. Note that the use
of   \eqref{Pnsm} is important in order to get the explicit  $\beta$-dependance of quantities involved in the lemma.
\end{proof}

\vspace{0.5cm}

We are now interested in obtaining similar estimates for powers of the
resolvent. Let $ \alpha >0$, $\omega \in \re$, $\xi_0 \in \C, \; \Re
\xi_0 < e_0(\omega)$. As bounded operators on
$ L^2( \Lambda)$ we have  \cite{If, Si}
\bel{rsg}
(H_\Lambda(\omega,V)- \xi_0)^{-\alpha} = \frac {1}{\tilde{\gamma}(\alpha)} \int_0^\infty t^{\alpha-1}e^{\xi_0 t} W_\Lambda(t, \omega) dt
\ee
where $\tilde{\gamma}(\cdot)$ is the  Euler gamma  function. In
particular  from \eqref{n1W} and  \eqref{n2W} for $\Re\xi_0 < 0$ and
$|\Re\xi_0|$ large enough there exists a constant $c > 0$  independent of $\omega \in \mathbb{R}$:
\bel{n2R}
\Vert  (H_\Lambda(\omega,V)- \xi_0)^{-1}\Vert_2 \leq  c \vert \Lambda
\vert^{\frac{1}{2}}\quad {\rm and}\quad \Vert  (H_\Lambda(\omega,V)-
\xi_0)^{-2}\Vert_1 \leq  c \vert \Lambda  \vert.
\ee

\section{Proof of the main theorem.}

\subsection{ $\omega$-analyticity of resolvents.}
The first technical result is the following:
\begin{proposition}
\label{thm2}
Let ${\omega} \in \mathbb{C}$ and $\xi \in \rho(H_\Lambda({\omega},V))$.
Then there exists a complex neighborhood ${\cal V}_\xi({\omega})$ of
${\omega}$ such that $\xi \in \rho(H_\Lambda({\omega'},V))$
and the operator valued function
${\cal V}_\xi({\omega}) \owns \omega' \mapsto
(H_\Lambda(\omega',V) -\xi)^{-1}$ is ${ B}_2$-analytic.
\end{proposition}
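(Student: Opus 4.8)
The plan is to use the Neumann series together with the second resolvent identity, exploiting the relative form boundedness structure established in the preceding proposition. First I would fix $\omega \in \C$ and $\xi \in \rho(H_\Lambda(\omega,V))$, and write, for $\omega'$ close to $\omega$, the perturbation $r_\Lambda(\omega,\omega') = d\omega\, r_{1,\Lambda}(\omega) + d\omega^2\, r_{2,\Lambda}$ from \eqref{def-r}, where $d\omega = \omega' - \omega$. Since $r_{1,\Lambda}(\omega)$ and $r_{2,\Lambda}$ have zero relative form bound with respect to $H_\Lambda(\omega,V)$ (shown in the proof of the previous proposition, and valid for complex $\omega$ by the extension of \eqref{def-r}), the KLMN-type machinery applies: the operator $(H_\Lambda(\omega,V) - \xi)^{-1}$ maps $L^2(\Lambda)$ into the form domain, and one can sandwich $r_\Lambda(\omega,\omega')$ between two such resolvents to obtain a bounded operator on $L^2(\Lambda)$ whose norm is $\mathcal{O}(|d\omega|)$ as $d\omega \to 0$. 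Concretely, I would factor through $|H_\Lambda(\omega,V) - \xi_0|^{1/2}$ for a convenient reference point $\xi_0$ with $\Re\xi_0$ very negative, using that $|H_\Lambda(\omega,V) - \xi_0|^{1/2}(H_\Lambda(\omega,V) - \xi)^{-1}$ and $r_{j,\Lambda}|H_\Lambda(\omega,V)-\xi_0|^{-1/2}$ are bounded.

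Second, with $T(\omega') := (H_\Lambda(\omega,V) - \xi)^{-1/2} r_\Lambda(\omega,\omega') (H_\Lambda(\omega,V) - \xi)^{-1/2}$ (or the asymmetric factorization above — whichever is cleanest given the non-self-adjointness), one has formally $H_\Lambda(\omega',V) - \xi = (H_\Lambda(\omega,V)-\xi)^{1/2}(\mathbb I + T(\omega'))(H_\Lambda(\omega,V)-\xi)^{1/2}$ in the form sense, so for $|d\omega|$ small enough that $\|T(\omega')\| < 1$, the operator $\mathbb I + T(\omega')$ is boundedly invertible and $\xi \in \rho(H_\Lambda(\omega',V))$ with
\begin{equation}
(H_\Lambda(\omega',V) - \xi)^{-1} = (H_\Lambda(\omega,V)-\xi)^{-1/2}\,(\mathbb I + T(\omega'))^{-1}\,(H_\Lambda(\omega,V)-\xi)^{-1/2}.
\end{equation}
This defines the neighborhood ${\cal V}_\xi(\omega)$. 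Since $\omega' \mapsto T(\omega')$ is a polynomial in $d\omega$ with bounded-operator coefficients, it is norm-analytic, hence $(\mathbb I + T(\omega'))^{-1}$ is norm-analytic via its Neumann series, and so $\omega' \mapsto (H_\Lambda(\omega',V)-\xi)^{-1}$ is norm-analytic (in particular, a locally uniformly convergent power series in $d\omega$ with $B(L^2(\Lambda))$ coefficients).

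Third, to upgrade norm-analyticity to ${\cal B}_2$-analyticity, I would insert one more resolvent factor. Using \eqref{n2R}, for a reference point $\xi_0$ with $|\Re\xi_0|$ large we have $(H_\Lambda(\omega,V) - \xi_0)^{-1} \in {\cal B}_2$; combining this with the second resolvent identity $(H_\Lambda(\omega',V)-\xi)^{-1} = (H_\Lambda(\omega',V)-\xi_0)^{-1} + (\xi - \xi_0)(H_\Lambda(\omega',V)-\xi_0)^{-1}(H_\Lambda(\omega',V)-\xi)^{-1}$, and the already-established norm-analyticity of all resolvents appearing on the right, I can write $(H_\Lambda(\omega',V)-\xi)^{-1}$ as a product of a fixed ${\cal B}_2$ operator with norm-analytic bounded operator factors. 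Since ${\cal B}_2$ is a two-sided ideal and multiplication $B(L^2) \times {\cal B}_2 \to {\cal B}_2$ is continuous, the product is ${\cal B}_2$-valued and analytic (its $\|\cdot\|_2$-convergent power series is obtained by collecting terms). The main obstacle I anticipate is the bookkeeping around the non-self-adjointness of $H_\Lambda(\omega,V)$: one cannot literally take $(H_\Lambda(\omega,V)-\xi)^{1/2}$ of a sectorial operator as conveniently as for a self-adjoint one, so I would instead carry out the factorization using the auxiliary self-adjoint comparison operator $H_\Lambda(0,V)$ (or $H_\Lambda(\omega_0,V)$ with $\omega_0=\Re\omega$) and its square root — writing $r_\Lambda$ as relatively form-bounded to $H_\Lambda(0,V) - \xi_0$ — and then transferring back via the boundedness of $(H_\Lambda(0,V)-\xi_0)^{1/2}(H_\Lambda(\omega,V)-\xi)^{-1}$, which follows from sectoriality and the form-domain identity ${\cal Q}(H_\Lambda(\omega,V)) = {\cal Q}(H_\Lambda(0,V))$.
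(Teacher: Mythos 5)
Your proposal is correct and takes essentially the same route as the paper: factor the perturbation through $(H_\Lambda(0,V)-\xi_0)^{-1/2}$ for a very negative $\xi_0$ (sidestepping the non-self-adjoint square root, exactly as you anticipated), invert by Neumann series using the zero relative form bound, and then transport ${B}_2$-analyticity to the given $\xi$ via the first resolvent identity together with the Hilbert--Schmidt estimate \eqref{n2R}. The only difference is cosmetic ordering: the paper perturbs directly around $H_\Lambda(0,V)$ to get ${B}_2$-analyticity at $\xi_0$ first and then extends to arbitrary $\xi$, whereas you first obtain norm-analyticity at $\xi$ and afterwards insert a ${B}_2$ resolvent factor; the ingredients are identical.
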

\begin{proof}  Let ${\omega} \in \mathbb{C}$. First we choose
$\xi_0  < 0 $ negative enough so that  $\xi_0 \in
\rho(H_\Lambda({\omega},V))$. Such a choice is possible
because $H_\Lambda({\omega},V)$ is $m$-sectorial.

 It is a well known fact \cite{RS2} that   since the perturbation
$r_\Lambda(0, \omega)$ (see \eqref{def-r}) is relatively  form bounded
to $H_\Lambda(0,V)$ with zero bound then
for $ \xi_0 <0 $ with $|\xi_0|$ large enough there exists some
complex neighborhood of  ${\omega}$ denoted by $\nu({\omega})$ such
that for all   $\omega' \in \nu({\omega})$ one has:
\bel{er0}
 \Vert (H_\Lambda(0,V)- \xi_0)^{-1/2} r_\Lambda(0, \omega')(H_\Lambda(0,V)- \xi_0)^{-1/2}\Vert <1.
\ee
   Set $ K( \xi_0, \omega'):= (H_\Lambda(0,V)- \xi_0)^{-1/2} r_\Lambda(0, \omega')(H_\Lambda(0,V)- \xi_0)^{-1/2}$.  From the estimate \eqref{er0} we conclude that $\forall \omega' \in \nu({\omega})$,    $\xi_0 \in  \rho(H_\Lambda({\omega'},V))$ and
  \bel{er1}
(H_\Lambda(\omega',V)- \xi_0)^{-1} = (H_\Lambda(0,V)- \xi_0)^{-1/2} ( \mathbb I +  K( \xi_0, \omega'))^{-1}(H_\Lambda(0,V)- \xi_0)^{-1/2}.
\ee
 holds in the bounded operator sense. And since $\omega' \in \nu({\omega}) \to K( \xi_0, \omega')$
is analytic, it follows that the bounded operators valued function
$\omega' \in \nu({\omega}) \to ( \mathbb I +  K( \xi_0, \omega'))^{-1}$ is analytic too.

On the other hand, from \eqref{er1}  we have

 $$ \Vert(H_\Lambda(\omega',V)- \xi_0)^{-1} \Vert_2 \leq   \Vert( \mathbb I +  K( \xi_0, \omega'))^{-1}\Vert\Vert(H_\Lambda(0,V)- \xi_0)^{-1} \Vert_2$$
which together with  \eqref{n2R}, it shows that  $\omega' \in
\nu({\omega}) \to (H_\Lambda(\omega',V)- \xi_0)^{-1} $ is a
Hilbert-Schmidt family of operators in $\omega' \in
\nu({\omega})$. Now it is straightforward to prove the theorem for
such a $\xi_0$. It remains to extend the ${ B}_2$-analyticity property
for any $\xi \in \rho(H_\Lambda({\omega},V))$.

 Let $\xi_0$ as above and consider the first resolvent equation
\begin{equation}
\label{fieqres}
(H_\Lambda({\omega},V)- \xi)^{-1} = (H_\Lambda({\omega},V)- \xi_0)^{-1} +(\xi - \xi_0) (H_\Lambda({\omega},V)- \xi)^{-1}(H_\Lambda({\omega},V)- \xi_0)^{-1}
\end{equation}
Since there exists a bounded complex neighborhood $V_\xi({\omega})$
of ${\omega}$ such that the operator-valued function $V_\xi({\omega})
\owns \omega' \mapsto (H_\Lambda(\omega',V)- \xi)^{-1}$ is
bounded-analytic, by standard arguments involving the bilateral ideal
property of $B_{2}$, the operator-valued function
$\omega' \mapsto (H_\Lambda({\omega'},V)-
\xi)^{-1}(H_\Lambda({\omega'},V)- \xi_0)^{-1}$ is ${ B}_2$-analytic on
$V_\xi({\omega})\cap \nu({\omega})$. Now use \eqref{fieqres} and
the proof is over.
\end{proof}

\vspace{0.5cm}

\begin{follow}
\label{coro}
Let ${\omega} \in \C$ and $\xi \in \rho(H_\Lambda({\omega},V))$.
Then there exists a neighborhood ${\cal V}_\xi({\omega})$ of ${\omega}$ such that the operator valued function ${\cal V}_\xi({\omega}) \owns \omega' \mapsto  (H_\Lambda(\omega',V) -\xi)^{-2}$ is ${ B}_1$-analytic.
\end{follow}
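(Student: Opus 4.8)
The plan is to reduce the $B_1$-analyticity of $\omega' \mapsto (H_\Lambda(\omega',V)-\xi)^{-2}$ to the $B_2$-analyticity already established in Proposition \ref{thm2}, using the standard factorization $B_1 = B_2 \cdot B_2$. First I would fix $\omega \in \C$ and $\xi \in \rho(H_\Lambda(\omega,V))$. By Proposition \ref{thm2} there is a complex neighborhood $\mathcal{V}_\xi(\omega)$ of $\omega$ on which $\xi \in \rho(H_\Lambda(\omega',V))$ and on which the map $\omega' \mapsto R(\omega') := (H_\Lambda(\omega',V)-\xi)^{-1}$ is $B_2$-analytic. The key algebraic identity is simply
\begin{equation}
(H_\Lambda(\omega',V)-\xi)^{-2} = R(\omega')\, R(\omega'),
\end{equation}
so the squared resolvent is pointwise the product of two $B_2$-valued analytic functions.

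Next I would invoke the fact that the product of two $B_2$-analytic operator-valued functions is $B_1$-analytic. Concretely, analyticity of $R$ as a $B_2$-valued map means it has a locally norm-convergent power series $R(\omega') = \sum_{n\geq 0} A_n (\omega'-\omega)^n$ with $A_n \in B_2$ and $\sum_n \|A_n\|_2 \, \rho^n < \infty$ for some $\rho>0$. Formally multiplying the two series gives $R(\omega')^2 = \sum_{m\geq 0}\big(\sum_{k=0}^m A_k A_{m-k}\big)(\omega'-\omega)^m$, and since $\|A_k A_{m-k}\|_1 \leq \|A_k\|_2 \|A_{m-k}\|_2$ by Hölder's inequality for Schatten norms, the rearranged series converges absolutely in $B_1$-norm on a possibly smaller polydisc, with the Cauchy product justified by absolute convergence. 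This shows $\omega' \mapsto R(\omega')^2$ is $B_1$-analytic on a neighborhood of $\omega$, which we again call $\mathcal{V}_\xi(\omega)$ after shrinking. (Alternatively one can argue via Morera/weak analyticity: $R(\omega')^2$ is $B_1$-valued — it is even trace class by \eqref{n2R}-type estimates plus the ideal property — it is locally bounded in $B_1$-norm, and pairing against finite-rank operators gives scalar analytic functions, hence it is $B_1$-analytic by a standard vector-valued analyticity criterion.)

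The main point requiring a little care — though it is not a serious obstacle — is justifying that termwise multiplication of the two $B_2$-convergent series yields a $B_1$-convergent series on a genuine neighborhood, i.e. controlling the Cauchy product. This is exactly where the Schatten–Hölder inequality $\|ST\|_1 \leq \|S\|_2\|T\|_2$ does the work: it converts the two separately square-summable coefficient sequences into a summable one, and the radius of convergence of the product series is at least that of the original (on the common polydisc). Everything else is the bilateral ideal property of $B_1$ inside the bounded operators, already used in the proof of Proposition \ref{thm2}. Hence the corollary follows. \qed
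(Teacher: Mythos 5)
Your proposal is correct and follows essentially the same route as the paper: factor $(H_\Lambda(\omega',V)-\xi)^{-2}$ as a product of two copies of the $B_2$-analytic resolvent from Proposition~\ref{thm2}, then use $\|ST\|_1\le\|S\|_2\|T\|_2$ to pass to $B_1$-analyticity. The paper simply says ``by a direct calculus we can check the statement''; your Cauchy-product argument is a reasonable way to spell out what that phrase leaves implicit.
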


\begin{proof} From Proposition \ref{thm2} we have that
$(H_\Lambda(\omega,V) -\xi)^{-2}$ is a  product of two Hilbert-Schmidt
operators. Thus $(H_\Lambda(\omega,V) -\xi)^{-2}$ is trace class. Then
by a  direct  calculus we can check the statement  of the corollary.
\end{proof}

\vspace{0.5cm}

Now we consider  $W_\Lambda(\beta,\omega), \, \omega \in \mathbb{R}$. We
want to extend $W_\Lambda(\beta,\omega)$ to complex $\omega$'s and in
trace class sense. We will use the fact that
the operator $H_\Lambda(\omega,V)$  is $m$-sectorial:
\begin{follow}
\label{coro1}
Let $ \beta >0 $. The family  $\{ W_\Lambda(\beta,\omega), \omega \in
\mathbb{R}\}$ can be extended to a $B_1$ entire family of operators. \end{follow}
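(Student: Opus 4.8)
The plan is to \emph{define} $W_\Lambda(\beta,\omega)$ for complex $\omega$ through the Dunford--Riesz integral attached to the $m$-sectorial operator $H_\Lambda(\omega,V)$, and then to exhibit it as the product of a bounded operator with a trace-class one, the latter being controlled by Corollary~\ref{coro}. Fix $\omega_0\in\C$. Using the $m$-sectoriality established above together with Remark~\ref{rem1}(i)--(ii), choose a neighbourhood $\mathcal U$ of $\omega_0$, a real $\xi_0<0$ with $|\xi_0|$ large, and an angle $\psi$ with $\sup_{\omega\in\mathcal U}\theta(\omega)<\psi<\pi/2$, so that for every $\omega\in\mathcal U$ the spectrum $\sigma(H_\Lambda(\omega,V))$ lies in the open sector $\{\xi:|\arg(\xi-\gamma)|<\psi\}$, $\xi_0$ lies to the left of its boundary $\Gamma$, and the resolvent bound \eqref{majres} holds uniformly for $\omega\in\mathcal U$ and $\xi\in\Gamma$; the possibility of keeping $\gamma$, $\psi$ and the constant in \eqref{majres} locally uniform in $\omega$ is precisely what Remark~\ref{rem1} provides. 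Since $\psi<\pi/2$, along $\Gamma$ one has $\Re\xi\to+\infty$ at the two ends, so $|e^{-\beta\xi}|=e^{-\beta\Re\xi}$ decays there, while near the vertex the resolvent stays bounded; hence
$$W_\Lambda(\beta,\omega):=\frac{1}{2\pi i}\int_\Gamma e^{-\beta\xi}\,(\xi-H_\Lambda(\omega,V))^{-1}\,d\xi$$
converges absolutely in operator norm for each $\omega\in\mathcal U$, and for $\omega\in\re$ it agrees with $e^{-\beta H_\Lambda(\omega,V)}$ by the ordinary functional calculus, so we are genuinely extending the family $\{W_\Lambda(\beta,\omega):\omega\in\re\}$.

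To extract a trace-class factor, insert two resolvents at $\xi_0$. The first resolvent identity gives, in the bounded-operator sense,
$$(\xi-\xi_0)^2(\xi-H_\Lambda(\omega,V))^{-1}(\xi_0-H_\Lambda(\omega,V))^{-2}=(\xi-H_\Lambda(\omega,V))^{-1}-(\xi_0-H_\Lambda(\omega,V))^{-1}+(\xi-\xi_0)(\xi_0-H_\Lambda(\omega,V))^{-2},$$
and integrating this against $(2\pi i)^{-1}e^{-\beta\xi}\,d\xi$ over $\Gamma$ the last two terms drop out, since their $\xi$-dependence is entire and $\int_\Gamma e^{-\beta\xi}\,d\xi=\int_\Gamma e^{-\beta\xi}(\xi-\xi_0)\,d\xi=0$ (close $\Gamma$ to the right, where the integrand decays exponentially). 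This yields the factorization
$$W_\Lambda(\beta,\omega)=\widetilde W(\omega)\,(H_\Lambda(\omega,V)-\xi_0)^{-2},\qquad \widetilde W(\omega):=\frac{1}{2\pi i}\int_\Gamma e^{-\beta\xi}(\xi-\xi_0)^2(\xi-H_\Lambda(\omega,V))^{-1}\,d\xi,$$
where the integral defining $\widetilde W(\omega)$ still converges in operator norm by the same decay estimate (the factor $|\xi-\xi_0|^2$ is dominated by $e^{-\beta\Re\xi}$ and partly compensated by $|\xi-\gamma|^{-1}$). Since $(H_\Lambda(\omega,V)-\xi_0)^{-2}\in B_1$ by \eqref{n2R}, the product $W_\Lambda(\beta,\omega)$ lies in $B_1$ for every $\omega\in\mathcal U$.

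It remains to prove $B_1$-analyticity on $\mathcal U$. For each fixed $\xi\in\Gamma$ the map $\omega\mapsto(\xi-H_\Lambda(\omega,V))^{-1}$ is bounded-analytic on $\mathcal U$ (Proposition~\ref{thm2} already gives $B_2$-analyticity, a fortiori bounded-analyticity), and the uniform integrable majorant furnished above lets one differentiate under the integral sign (or invoke Morera plus dominated convergence), so $\widetilde W\colon\mathcal U\to\mathcal B(L^2(\Lambda))$ is bounded-analytic. By Corollary~\ref{coro}, $\omega\mapsto(H_\Lambda(\omega,V)-\xi_0)^{-2}$ is $B_1$-analytic on $\mathcal U$. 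Since $B_1$ is a two-sided ideal and the product of a bounded-analytic family with a $B_1$-analytic one is $B_1$-analytic (Leibniz rule for the $B_1$-valued derivative), $W_\Lambda(\beta,\cdot)$ is $B_1$-analytic on $\mathcal U$. As $\omega_0\in\C$ was arbitrary and the contour integral does not depend on the admissible contour (Cauchy's theorem), the local pieces patch into a single $B_1$-entire family extending $\{W_\Lambda(\beta,\omega):\omega\in\re\}$, which is the assertion.

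The step I expect to be the real work is the contour bookkeeping of the first two paragraphs: producing one contour $\Gamma$ valid on a whole neighbourhood of $\omega_0$ with a \emph{uniform} resolvent bound — here it is essential that, by Remark~\ref{rem1}, the sectorial data can be chosen locally constant in $\omega$, since $\theta(\omega)=\arctan(c|\Im\omega|)$ itself grows — together with the careful justification of the vanishing of the two spurious integrals and of the convergence of $\widetilde W(\omega)$ both near the vertex of the sector and at infinity. Everything after that is mechanical, given Corollary~\ref{coro} and the ideal property of $B_1$.
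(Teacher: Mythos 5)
Your argument is correct and follows the same strategy as the paper: define the extension by the Dunford--Riesz integral over the boundary of an enlarged sector enclosing $\sigma(H_\Lambda(\omega',V))$ uniformly for $\omega'$ near $\omega_0$, apply the first resolvent identity twice so that the spurious terms drop out and $(H_\Lambda(\omega,V)-\xi_0)^{-2}$ factors out, and conclude $B_1$-analyticity from Corollary~\ref{coro} together with the bounded-analyticity of the remaining contour integral. The paper leaves implicit some of the bookkeeping you spell out (the $\omega$-uniformity of the contour via Remark~\ref{rem1} and the vanishing of the two extra integrals after closing the contour to the right), but the decomposition, the role of \eqref{majres}, and the appeal to Corollary~\ref{coro} are identical.
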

\begin{proof} Let $\beta > 0$, $\omega \in \mathbb{C}$.   Consider the curve in
  $\C$ given by  $ \Gamma :=\{ \xi \in \C\,:\,  \vert \arg( \xi -\gamma') \vert = \theta
  + \epsilon \}$ where $ \gamma', \epsilon$ are chosen such that
$\gamma- \gamma' =1$ and
$\theta + \epsilon < \frac{\pi}{2}$. Here $ \gamma, \theta $ are given by the Remark \ref{rem1} ii). The curve $\Gamma$ encloses the spectrum of $H_\Lambda(\omega',V)$ for all $\omega'$ in a neighborhood of $\omega$, $\nu(\omega)$.
From the Dunford functional calculus \cite{DS},  the following relation holds in terms of bounded operators:
\begin{equation}
\label{depart}
 W_\Lambda(\beta,\omega') :=  \frac{i}{2 \pi}\int_\Gamma d\xi e^{-\beta \xi}
( H_\Lambda(\omega',V)-\xi)^{-1}, \; \quad \omega' \in \nu(\omega)
\end{equation}
This shows that the semigroup can be extended to a bounded operator for every complex $\omega'$.
We now want to show that this formula defines
in fact a trace class operator.
Choose $\xi_0 \in \C$ with $\Re\xi_0 <0$ and $|\Re\xi_0|$ large enough
so that $\xi_0 \in \rho( H_\Lambda(\omega',V))$ for any $\omega' \in
\nu(\omega)$. Using twice the resolvent formula \eqref{fieqres} in
\eqref{depart} we obtain the identity:
\begin{equation}
\label{depart1}
 W_\Lambda(\beta,\omega') = \frac{i}{2 \pi} \int_\Gamma d\xi e^{-\beta \xi} (\xi-\xi_0)^2( H_\Lambda(\omega',V)-\xi)^{-1}
 ( H_\Lambda(\omega',V)-\xi_0)^{-2}, \quad \omega' \in \nu(\omega)
\end{equation}
From the choice of $\Gamma$, the bounded operator valued
function $\nu(\omega)\ni \omega' \to ( H_\Lambda(\omega',V)-\xi)^{-1}$
is analytic for all $\xi \in \Gamma$, and all norm bounds are uniform in
$\xi\in\Gamma$. Moreover, from \eqref{majres} we conclude that
there exists a constant $C>0$ such that
$\Vert e^{-\beta \xi} (\xi-\xi_0)^2( H_\Lambda(\omega',V)-\xi)^{-1}
 \Vert  \leq C \vert \Re \xi  \vert^2 e^{-\beta \Re \xi}$. Therefore
 $$ \nu(\omega)\ni \omega'  \to \int_\Gamma d\xi e^{-\beta \xi} (\xi-\xi_0)^2( H_\Lambda(\omega',V)-\xi)^{-1}$$
is bounded analytic too.
Hence from  Corollary  \ref{coro}  and  \eqref{depart1} the
operators valued function $\nu(\omega)\ni\omega'
\to W_\Lambda(\beta,\omega') $ is $B_1$-analytic. Thus $W_\Lambda(\beta,\cdot\,)$ is $B_{1}$-entire.
\end{proof}

\subsection{$\omega$-analyticity of the pressure.}
\medskip

Let    $\beta>0$, $\omega \in \re$  and  $z \in D_\epsilon(e_{0}) \cap  \re$. Define
\bel {f}
 [e_0(\omega) , \infty) \owns \xi  \mapsto
\ln {\big(1 + \epsilon z e^{-\beta \xi}} \big)\nonumber
\ee

We have that the map  $(z,\xi)\mapsto \ln {\big(1 + \epsilon z e^{-\beta \xi}} \big)$ is jointly analytic in
\begin{equation}
\label{proana1}
\big\{ (z, \xi)\in \C \times \C\,:\,\vert z \vert e^{- \beta \Re \xi} <1\big\}
\end{equation}
but this is not sufficient and we also need to control the region in which
$\Re \xi$ is close to the bottom of the spectrum. Let $\beta>0$, $\omega \in \re$, $-\infty < e'_0 \leq  e_0= e_0(\omega)$  and consider the domains $D_\epsilon  (e'_0)$ defined as in \eqref {D}  but  with $e'_0$ instead $e_0$. Then:

\begin{lemma} \label{anf1} Let   $\beta>0$, $\omega \in \re$ and  $-\infty < e'_0 \leq e_0$. For each compact $K \subset D_\epsilon(e'_0)$ there exists $\eta_{K} >0$
such that  $(z,\xi)\mapsto \ln {\big(1 + \epsilon z e^{-\beta \xi}} \big)$
is  jointly analytic in

\begin{equation} \label{etaK} K  \times \Big\{ \xi \in \mathbb{C}\,:\,  \Im \xi \in \Big(- \frac{\eta_{K}}{\beta}, \frac{\eta_{K}}{\beta}\Big), \quad \Re \xi \in [e'_0, \infty) \Big\}
\end{equation}
If $K'$ is a compact subset such that $ K' \subset K$ then $\eta_{K}' > \eta_{K}$.
\end{lemma}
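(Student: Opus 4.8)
The plan is to reduce the claim to an explicit description of where the function $\xi\mapsto 1+\epsilon z e^{-\beta\xi}$ avoids the branch cut of the logarithm, uniformly for $z$ in a compact set $K\subset D_\epsilon(e_0')$, and then to absorb a small imaginary perturbation of $\xi$. Recall that $\ln$ is analytic on $\C\setminus(-\infty,0]$; so the composite map is analytic precisely on the set where $w(z,\xi):=1+\epsilon z e^{-\beta\xi}$ is not a non-positive real number, and it is jointly analytic there because $w$ is entire in $(z,\xi)$. Hence the whole statement is equivalent to: for each compact $K\subset D_\epsilon(e_0')$ there is $\eta_K>0$ so that $w(z,\xi)\notin(-\infty,0]$ whenever $z\in K$ and $\xi$ lies in the strip \eqref{etaK}.

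First I would treat the boundary case $\xi$ real, $\Re\xi=\xi\ge e_0'$. Then $t:=e^{-\beta\xi}\in(0,e^{-\beta e_0'}]$, and $w=1+\epsilon z t$. In the Fermi case $\epsilon=+1$: $w\in(-\infty,0]$ would force $zt\le -1$, i.e. $z\le -1/t$; since $t\le e^{-\beta e_0'}$ this needs $z\le -e^{\beta e_0'}$, which is excluded because $K\subset D_{+1}(e_0')=\C\setminus(-\infty,-e^{\beta e_0'}]$. In the Bose case $\epsilon=-1$: $w\le 0$ forces $zt\ge 1$, hence $z\ge 1/t\ge e^{\beta e_0'}$, again excluded since $K\subset D_{-1}(e_0')=\C\setminus[e^{\beta e_0'},\infty)$. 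So $w(z,\xi)\notin(-\infty,0]$ on the real slice, and by compactness of $K$ there is in fact a uniform lower bound: $\mathrm{dist}\big(w(z,\xi),(-\infty,0]\big)\ge 2c_K>0$ for all $z\in K$ and all real $\xi\ge e_0'$. (For this I would note that as $\xi\to\infty$, $w\to 1$ uniformly, so the infimum of the distance is attained on a compact $\xi$-range times $K$, where it is positive by the previous paragraph.)

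Next I would extend to the strip by a uniform-continuity/compactness argument. Write $\xi=\xi_r+i\xi_i$ with $\Re\xi_r\ge e_0'$ and $|\xi_i|<\eta_K/\beta$. Then
\[
|w(z,\xi)-w(z,\xi_r)|=|z|\,e^{-\beta\xi_r}\,\bigl|e^{-i\beta\xi_i}-1\bigr|\le |z|\,e^{-\beta e_0'}\,\beta|\xi_i|\le \bigl(\sup_{z\in K}|z|\bigr)e^{-\beta e_0'}\,\eta_K .
\]
Choosing $\eta_K>0$ small enough that the right-hand side is $<c_K$, the triangle inequality gives $\mathrm{dist}\big(w(z,\xi),(-\infty,0]\big)\ge c_K>0$ for all $z\in K$ and all $\xi$ in the strip \eqref{etaK}. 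Therefore $w(z,\xi)$ stays in the slit plane on which $\ln$ is analytic, and joint analyticity of $(z,\xi)\mapsto\ln w(z,\xi)$ follows from analyticity of $\ln$ composed with the entire map $w$. The final monotonicity remark ($K'\subset K\Rightarrow \eta_{K'}\ge\eta_K$, with the inequality as written, i.e. one may take a larger strip for a smaller compact) is immediate: shrinking $K$ only increases $c_K$ and decreases $\sup_{z}|z|$, so the admissible $\eta$ grows; one simply defines $\eta_{K'}$ via the supremum over all admissible values.

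The only genuinely delicate point is the uniform positivity constant $c_K$ in the second step: one must rule out the distance tending to zero either as $z$ approaches $\partial D_\epsilon(e_0')$ from inside $K$ (excluded because $K$ is compactly contained, with a positive gap to the cut) or as $\xi\to\infty$ along the real axis (excluded because $w\to 1$). Once $c_K$ is secured, the extension to the strip is routine.
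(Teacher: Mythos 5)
Your proof is correct, and it takes a genuinely different route from the paper. The paper works in polar coordinates: it splits $K$ into the part inside the ball $\mathcal{B}(e^{\beta e_0'})$ (where analyticity is immediate from $|z|e^{-\beta\Re\xi}<1$) and the part $\tilde K$ outside it, then controls $\arg(z)-\beta\Im\xi$ directly, defining $\eta_K$ explicitly as half the angular gap between $\tilde K$ and the positive real axis (Bose case) or between $\tilde K$ and the negative real axis (Fermi case), and checking that with this choice $1+\epsilon z e^{-\beta\xi}$ can never be a non-positive real. Your argument is instead a soft compactness--perturbation scheme: you first verify on the real slice $\Im\xi=0$ that $w(z,\xi)$ avoids $(-\infty,0]$ (implicitly using that $w\in(-\infty,0]$ forces $z$ real because $e^{-\beta\xi}>0$), upgrade this to a uniform lower bound $2c_K$ on $\mathrm{dist}(w,(-\infty,0])$ via compactness in $z$ and the fact that $w\to1$ as $\Re\xi\to\infty$, and then absorb a small imaginary part of $\xi$ through the Lipschitz estimate $|e^{-i\beta\xi_i}-1|\le\beta|\xi_i|$. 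What the paper's approach buys is an explicit, geometrically transparent formula for $\eta_K$ in terms of the angular extent of $K$, which makes the final monotonicity claim visible at a glance; what yours buys is that it avoids case-splitting between inside/outside the ball and between Bose/Fermi, treating both in one stroke, at the price of a non-constructive $\eta_K$ (you only get $\eta_{K'}\ge\eta_K$ rather than the paper's asserted strict inequality, but the paper's own explicit formula in fact only gives $\ge$ as well, so nothing is lost).
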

\begin{proof}

We first deal with the Bose case. Here ${\cal B}(r)$ is an open ball in $\C$
centered at the origin having radius $r>0$.  Obviously
$(z,\xi ) \mapsto \ln {\big(1 + \epsilon z e^{-\beta \xi}} \big)$ is a  jointly analytic  function in $ \Re \xi \in [e'_0(\omega), \infty), z \in {\cal B}( e^ {\beta e'_0})$.
Let $K$ be  a compact of $D_{-1}(e'_0)$ and denote by $\tilde K = K \setminus {\cal B}( e^ {\beta e'_0})$. Let
$$ \theta_m := \inf\{ \arg(z), z \in \tilde K\}, \quad  \theta_M := \sup\{ \arg(z), z \in \tilde K\}$$
Because  ${\rm dist} (\tilde K, [e^{\beta e'_0}, \infty) ) >0$, then $0< \theta_m \leq \theta_M < 2\pi$. We set $\eta_K:=
\frac{1}{2}\inf \{\theta_m, 2\pi-\theta_M \}$. Clearly for $z \in \tilde K$ and  $\Im \xi \in [- \frac{\eta_K}{\beta},\frac{\eta_K}{\beta}]$,
$0<\frac{\theta_m}{2}  \leq \arg z - \beta\Im \xi \leq \pi + \frac{\theta_M}{2} < 2\pi$ then $ \Im(1 -z e^{-\beta \xi})= 0$ iff $ \arg z - \beta\Im \xi = \pi$ but in this last case $\Re(1 -z e^{-\beta \xi}) >0 $.\\

For the Fermi case we get the lemma following the same arguments as
above.
Let $K$ be  a compact of $D_{+1}(e'_0)$ and denote by $\tilde K = K \setminus {\cal B}( e^ {\beta e'_0})$. Let
$$ \theta_m := \sup \{ \arg(z), z \in \tilde K, \arg(z)\geq 0 \}, \quad  \theta_M := \inf \{ \arg(z), z \in \tilde K, \arg(z) <0\}$$
We set $\eta_{K} := \frac{1}{2} \inf \{\pi - \theta_m, \pi + \theta_M\}$. Clearly for $z \in \tilde K$ and  $\Im \xi \in [- \frac{\eta_K}{\beta},\frac{\eta_K}{\beta}]$,
$-\pi<-\frac{\pi}{2} + \frac{\theta_M}{2}  \leq \arg z - \beta\Im \xi \leq \frac{\pi}{2} + \frac{\theta_m}{2} < \pi$ then $ \Im(1 +z e^{-\beta \xi})= 0$ iff $ \arg z - \beta\Im \xi = 0$ but in this last case $\Re(1 +z e^{-\beta \xi}) >0 $.\\
\end{proof}

\vspace{0.5cm}

\begin{proposition}\label{omnapress0}
Let $\beta > 0$, $\omega_0 \in \re$ and $ K \subset D_\epsilon(e_0(\omega_0)) $ a compact subset.
Then there exists a complex neighborhood ${\cal V}(\omega_0)$ of $\omega_0$ such that for any $z \in K$, the pressure is an analytic function w.r.t. $\omega$ in ${\cal V}(\omega_0)$.
\end{proposition}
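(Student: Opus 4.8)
\textbf{Proof plan for Proposition \ref{omnapress0}.}

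The plan is to express the pressure as a contour integral of the resolvent, using the Dunford functional calculus together with the trace-class analyticity already established, and then invoke Lemma \ref{anf1} to justify that the function $\xi \mapsto \ln(1+\epsilon z e^{-\beta\xi})$ can be inserted into this calculus on a suitable sector/strip that surrounds $\sigma(H_\Lambda(\omega',V))$ for all $\omega'$ near $\omega_0$. More precisely, by Remark \ref{rem1} (ii) the spectrum of $H_\Lambda(\omega',V)$ is contained in the sector ${\cal S}_\delta(\omega')$ with vertex at $\gamma = -c_2/c_1 < e_0(0)$ and half-angle $\theta(\omega')$ tending to $0$ as $\Im\omega' \to 0$; hence for $\omega'$ in a small enough complex neighborhood ${\cal V}(\omega_0)$ of $\omega_0\in\re$ these sectors are all contained in a fixed sector of small opening around the real half-line $[\gamma,\infty)$, and in particular in the strip of \eqref{etaK} once we shrink the opening below $\eta_K/(\beta\cdot\mathrm{dist})$. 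Choosing $e_0' < e_0(\omega_0)$ slightly and a compact $K \subset D_\epsilon(e_0(\omega_0)) \subset D_\epsilon(e_0')$, Lemma \ref{anf1} supplies the needed joint analyticity of the integrand on a neighborhood of the contour.

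The key steps, in order, are: (1) fix $z \in K$; pick $e_0' < e_0(\omega_0)$ and the associated $\eta_K > 0$ from Lemma \ref{anf1}; (2) choose a contour $\Gamma$ in $\C$, consisting of two rays making a small angle $\theta+\epsilon$ with the real axis issuing from a point $\gamma' < \gamma$, exactly as in the proof of Corollary \ref{coro1}, small enough that $\Gamma$ lies in the analyticity region \eqref{etaK} for the chosen $z$, and shrink ${\cal V}(\omega_0)$ so that $\Gamma$ encloses $\sigma(H_\Lambda(\omega',V))$ for all $\omega' \in {\cal V}(\omega_0)$; (3) write, via Dunford calculus,
\begin{equation}
\label{pressint}
P_\Lambda(\beta,\omega',z,\epsilon) = \frac{\epsilon}{\beta|\Lambda|}\,\frac{i}{2\pi}\int_\Gamma d\xi\, \ln(1+\epsilon z e^{-\beta\xi})\,\Tr_{L^2(\Lambda)}\big( H_\Lambda(\omega',V)-\xi\big)^{-1};
\end{equation}
(4) regularize the resolvent under the trace: insert $\xi_0$ with $\Re\xi_0$ very negative, use the first resolvent identity \eqref{fieqres} twice to replace $(H_\Lambda(\omega',V)-\xi)^{-1}$ by $(\xi-\xi_0)^2(H_\Lambda(\omega',V)-\xi)^{-1}(H_\Lambda(\omega',V)-\xi_0)^{-2}$, so that the integrand becomes the trace of a product of a uniformly bounded analytic family and a fixed $B_1$-analytic family (Corollary \ref{coro}), while $\ln(1+\epsilon z e^{-\beta\xi})(\xi-\xi_0)^2 e^{-\beta\xi}$ decays like $|\Re\xi|^2 e^{-\beta\Re\xi}$ along $\Gamma$ thanks to \eqref{majres}; (5) conclude that $\omega' \mapsto \Tr(\cdots)$ is analytic on ${\cal V}(\omega_0)$ with a locally uniform (in $\omega'$) and integrable (in $\xi$ along $\Gamma$) bound, so that Morera's theorem and Fubini give analyticity of the integral, hence of $P_\Lambda(\beta,\cdot\,,z,\epsilon)$.

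A couple of points deserve care. First, one must check that \eqref{pressint} agrees with the original definition \eqref{P2}: for real $\omega'$ this follows by the spectral theorem since $H_\Lambda(\omega',V)$ has purely discrete spectrum in $[\gamma,\infty)$ and the contour integral of $\ln(1+\epsilon z e^{-\beta\xi})(\lambda-\xi)^{-1}$ reproduces $\ln(1+\epsilon z e^{-\beta\lambda})$ for each eigenvalue $\lambda$; for complex $\omega'$ it is the \emph{definition} of the analytic continuation, and by the identity theorem it is the unique one. Second, the contour $\Gamma$ depends on $z$ only through the requirement that it avoids the branch cut of $\xi \mapsto \ln(1+\epsilon z e^{-\beta\xi})$; by Lemma \ref{anf1} the admissible strip half-width $\eta_K/\beta$ is uniform over $z \in K$, so a single $\Gamma$ works for the whole compact $K$. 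The main obstacle, and the step requiring the most attention, is step (4)–(5): reconciling the fact that the \emph{function} $\ln(1+\epsilon z e^{-\beta\xi})$ is only defined on the strip \eqref{etaK} (not on a neighborhood of all of $\sigma(H_\Lambda(\omega',V))$ a priori) with the need for $\Gamma$ to wrap around the unbounded spectrum — this is exactly why the sectoriality estimate \eqref{majres}, which forces $\sigma(H_\Lambda(\omega',V))$ into a narrow sector asymptotic to the real axis, is indispensable, and one must be careful that the trace-norm bound \eqref{n2R}, the resolvent bound \eqref{majres}, and the decay of the logarithmic factor combine to an absolutely convergent integral uniformly for $\omega'$ in a neighborhood of $\omega_0$.
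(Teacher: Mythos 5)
Your proposal follows essentially the same route as the paper's proof: represent the quantity $\ln(\mathbb{I}+\epsilon z W_\Lambda(\beta,\omega))$ via a Dunford contour integral of the resolvent, use Lemma \ref{anf1} together with \eqref{proana1} to ensure the contour lies in the analyticity region of $\xi\mapsto\ln(1+\epsilon z e^{-\beta\xi})$, regularize with the $(\xi-\xi_0)^2$ factor and Corollary \ref{coro} to pass to $B_1$-analyticity, check the identity for real $\omega$ by the spectral theorem, and extend by analytic continuation. All the key ingredients and their roles are correctly identified.

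The one place where you deviate from the paper, and where your proposal is not quite watertight, is the choice of contour. You propose two rays issuing from a vertex $\gamma'<\gamma$ at angle $\theta+\epsilon$, ``exactly as in Corollary \ref{coro1},'' and claim the contour can be made to lie in the strip \eqref{etaK}. That cannot hold literally --- any ray with positive opening eventually leaves a horizontal strip --- so what you actually need is that the rays stay in the strip until $\Re\xi>\xi_K$, after which \eqref{proana1} takes over; you allude to this but never pin down the constraint $(\xi_K-\gamma')\tan(\theta+\epsilon)<\eta_K/\beta$. More serious: the Corollary \ref{coro1} contour has its vertex at $\gamma'=\gamma-1$, which may sit to the \emph{left} of $e_0'$, and for $\Re\xi<e_0'$ Lemma \ref{anf1} gives no control. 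Indeed for $z$ on the negative real axis with $|z|<e^{\beta e_0'}$ (which is allowed in $D_{+1}(e_0')$), the function $1+z e^{-\beta\xi}$ vanishes on the real axis at $\xi=\beta^{-1}\ln|z|<e_0'$, so the logarithm has a branch cut to the left of $e_0'$, and a contour with vertex there would cross it. To repair this one must impose $\gamma'\geq e_0'$ in addition to $\gamma'<\gamma$, hence $e_0'<\gamma$; but $e_0'$ cannot be pushed arbitrarily far left because one also needs $K\subset D_\epsilon(e_0')$, and these two requirements can conflict depending on $K$. The paper's contour $\Gamma_K$ in \eqref{contour} sidesteps all of this by design: the vertical segment at $\Re\xi=e_0'$ and the horizontal segments at $|\Im\xi|=\eta_K/(2\beta)$ never leave the strip no matter how close $e_0'$ is to $e_0$, and the tilt to $\pi/4$ happens only after $\Re\xi>\xi_K$, so the contour enters neither the spectral sector nor the branch locus of the logarithm. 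If you replace your two-ray contour with $\Gamma_K$ (or otherwise avoid having a vertex that must lie simultaneously to the right of $e_0'$ and to the left of $\gamma$), your argument is correct as stated.
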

\begin{proof}

Let $\omega_0 \in \re$,
 and $ K \subset D_\epsilon(e_0) $,  $e_0= e_0(\omega_0)$ be  a compact subset.Then there exists  $e'_0$ satisfying $- \infty < e'_0< e_0$ such that $K \subset D_\epsilon(e'_0)$.
Consider  now  the following positively oriented contour defined by

$$
\Gamma_{K} := \Big\{ \Re\xi = e'_0,\,\,\Im \xi \in \Big[-\frac{\eta_{K}}{2 \beta},\frac{\eta_{K}}{2 \beta}\Big]\Big\} \cup \Big\{ \Re \xi \in [e'_0, \xi_K), \,\,\vert \Im \xi  \vert=  \frac{\eta_{K}}{2 \beta}\Big\} \cup$$

\begin{equation}\label{contour}
 \Big \{  \Re \xi  \geq  \xi_K ,  \,\,  \arg\Big( \xi -   \xi_K \mp i  \frac{ \eta_K }{2\beta}\Big) = \mp  \frac{\pi}{4} \Big \}
 \end{equation}
where $ \eta_K > 0$ is given  by  \eqref{etaK}; $\xi_K$ is chosen so that $\xi_{K} > e_{0}$ and satisfies the condition \eqref{proana1} i.e.
$$ \sup_{z \in K} \{ \vert z \vert \} e^{- \beta \Re \xi} <1 \quad \rm {if}  \quad  \Re \xi > \xi_K $$

Recall  the domain of analyticity of $\xi \mapsto \ln {\big(1 + \epsilon z
  e^{-\beta \xi}} \big)$ defined by lemma \ref{anf1} knowing
\eqref{proana1}. Then $\Gamma_{K}$ is enclosed in this domain of
analyticity.

Let ${\cal B}(\omega_0,r)$  be an open ball in $\C$ centered at
$\omega_0$ and radius $r > 0$. If  $r$ is small enough
then for $\omega \in  {\cal B}(\omega_0,r)$, the spectrum of $H_\Lambda(\omega,V)$ as well as the sector ${\cal S}(\omega)$ defined in \eqref{sect} for $\Re \xi > \xi_{K}$ together lie
inside $\Gamma_{K}$. To see this  we use the Remark \ref{rem1} i).

 For $\beta >0$, $z \in K$ and $\omega \in {\cal B}(\omega_0,r) $ consider the following  Dunford  integral operator \cite{DS}
 $$ I(\beta,z, \omega):= \frac{i}{2\pi}\int_{\Gamma_{K}}
 \mathrm{d}\xi\,
\ln {\big(1 + \epsilon z e^{-\beta \xi}} \big)\big(H_{\Lambda}(\omega,V) - \xi\big)^{-1} $$
The above  integral
  converges and defines a bounded operator
due to the exponential decay of $\ln {\big(1 + \epsilon z e^{-\beta \xi}} \big)$ in $\Re \xi$ and because of \eqref{majres}.

Again here the  choice of the contour implies that if  $r$
is small enough then  for each
   $\xi \in \Gamma_{K}$  the bounded  operator valued function $ {\cal
     B}(\omega_0,r)   \owns \omega \mapsto
   (H_{\Lambda}(\omega,V)-\xi\big)^{-1}$ is analytic.
Therefore for $r$ small enough
  $\{ I(\beta,z, \omega),\, \omega \in  {\cal B}(\omega_0,r)\}$  is an
  analytic family of  bounded operators  in  $L^2(\Lambda)$. By
  analytic continuation we conclude that
$$I(\beta,z, \omega)= \ln{\big( {\mathbb I} + \epsilon
  zW_{\Lambda}(\beta,\omega)\big)}  $$ for all $\omega \in  {\cal
  B}(\omega_0,r)$ because the equality holds for real $\omega$.

Now choose a $\xi_0$ with a very negative $\Re \xi_0 $.
Then we get:
\begin{equation}
\label{lnomco}
\ln{\big({\mathbb I} + \epsilon z W_{\Lambda}(\beta,\omega)\big)} =
\bigg(\frac{i}{2\pi}\int_{\Gamma_{K}} \mathrm{d}\xi\, (\xi -
\xi_0)^{2}
\ln {\big(1 + \epsilon z e^{-\beta \xi}} \big)
(H_{\Lambda}(\omega,V)-\xi\big)^{-1}\bigg) \big(H_{\Lambda}(\omega,V) -\xi_0\big)^{-2}
\end{equation}
This implies that if $r$ is small enough, the family
 $\{ I(\beta,z, \omega),\, \omega \in  {\cal B}(\omega_0,r)\}$
is also analytic in the trace class
 topology.
The proof is over.
\end{proof}

\vspace{0.5cm}
\subsection{Proof of the analyticity w.r.t $\omega$ and $z$.}
Recall that  the Hartog  theorem \cite{hartog},  implies the joint analyticity once we know the analyticity  w.r.t. each variable separately.

Put ${\cal V} := \bigcup_{\omega_0 \in \re} {\cal V}(\omega_0)$. Then
from the Proposition \ref{omnapress0} for any $z \in { K}$ the pressure is an
analytic function
with respect to $\omega$ in ${\cal V}$.
This is the  first thing we need in order to apply the Hartog theorem.

 Now let $ \beta > 0$ and $K$ as in the theorem.  We want to show that
there exists a neighborhood  of the real axis $ {\cal N}$ such that
for any $\omega \in {\cal N}$, the function $ K \owns z \mapsto P_{\Lambda}
(\beta, \omega,z,\epsilon) $ is analytic.

We use formula \eqref{lnomco} but with $e_{0}^{'} < e_{0}(0)$ in the definition \eqref{contour} of $\Gamma_{K}$. The only thing we have   to show is  that
for $\omega \in \C$, $  \Im \omega $ small enough  the  $ { B}_1 $-operator  valued function  $ \Gamma_{K}\ni\xi \to  (H_{\Lambda}(\omega)-\xi\big)^{-1} \big(H_{\Lambda}(\omega) -\xi_0\big)^{-2}$ is  uniformly bounded for $ \Re \xi$ large enough.
But   this is  true since  \eqref{majres}  implies  that   $ \xi \in
\Gamma_{K} \to  \Vert (H_{\Lambda}(\omega)-\xi\big)^{-1} \Vert$  is
uniformly bounded for $ \Re \xi$ large enough and we know that $
\big(H_{\Lambda}(\omega) -\xi_0\big)^{-2} \in B_2$. The proof is
over.
\qed

\vspace{0.5cm}

For $\beta > 0$, $\omega_0 \in \re$ and $z \in D_\epsilon(e_0)$, the grand canonical generalized susceptibilities at finite volume are defined by
\begin{equation}
\label{suscepti}
\chi_\Lambda^N(\beta,\omega_0, z,\epsilon) := \bigg(\frac{e}{c}\bigg)^{N}\frac{\partial^N P_\Lambda}{\partial \omega^N}(\beta,\omega_0,z,\epsilon),\quad N \in \mathbb{N}^*
\end{equation}

By Proposition \ref{omnapress0}, $\chi_\Lambda^N(\beta,\omega_0,z,\epsilon) $   are well defined.
In the physical  literature (see e.g. \cite{Hu}), the cases $N=1$ and $N=2$ correspond respectively to the grand canonical magnetization and  magnetic susceptibility per unit volume. Moreover

\begin{follow} Let $\beta > 0$ and  $N \geq 1$. For each open set $K$ with the property that $\overline{K}$ is compact and $\overline{K} \subset {\bf D}_\epsilon$,  there exists a complex neighborhood ${\cal N}$ of the real axis such that ${\cal N}\,\mathrm{x}\,{K} \owns (\omega,z) \mapsto \chi_\Lambda^N(\beta,\omega,z,\epsilon)$  is  analytic.
\end{follow}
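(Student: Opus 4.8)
The plan is to read this off directly from Theorem~\ref{omanapress2}~(i): once we know that $(\omega,z)\mapsto P_{\Lambda}(\beta,\omega,z,\epsilon)$ is jointly analytic on ${\cal N}_{0}\times K$ for some complex neighbourhood ${\cal N}_{0}$ of the real axis, its $N$-th partial $\omega$-derivative is automatically jointly analytic on a slightly smaller set, since partial derivatives of a function holomorphic in several complex variables are again holomorphic in all of them. Multiplying by the constant $(e/c)^{N}$ then yields the statement for $\chi_{\Lambda}^{N}$, with ${\cal N}$ taken uniformly in $z\in K$ exactly because ${\cal N}_{0}$ from Theorem~\ref{omanapress2}~(i) already is.

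Concretely, I would first apply Theorem~\ref{omanapress2}~(i) to the given $K$ to produce ${\cal N}_{0}$. Then I would fix a small $\rho>0$ and let ${\cal N}\subset{\cal N}_{0}$ be the set of those $\omega$ whose closed disc of radius $\rho$ about $\omega$ is still contained in ${\cal N}_{0}$; after shrinking $\rho$ if necessary this ${\cal N}$ remains a complex neighbourhood of the real axis. For $(\omega,z)\in{\cal N}\times K$ I would then use the Cauchy integral representation
$$\frac{\partial^{N} P_{\Lambda}}{\partial\omega^{N}}(\beta,\omega,z,\epsilon)=\frac{N!}{2\pi i}\oint_{|\zeta-\omega|=\rho}\frac{P_{\Lambda}(\beta,\zeta,z,\epsilon)}{(\zeta-\omega)^{N+1}}\,d\zeta .$$
For each fixed $\zeta$ on the contour the integrand is jointly continuous in $(\omega,z)$, analytic in $\omega$ (trivially) and analytic in $z$ by Theorem~\ref{omanapress2}~(i); hence by Morera's theorem together with Fubini — or directly by Hartogs' theorem — the right-hand side, and therefore $\chi_{\Lambda}^{N}(\beta,\cdot,\cdot,\epsilon)$, is jointly analytic on ${\cal N}\times K$.

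I do not expect any genuine obstacle here; the only point needing care is keeping the Cauchy contour $\{|\zeta-\omega|=\rho\}$ inside the region ${\cal N}_{0}$ on which the pressure is known to be jointly analytic, which is precisely what the definition of ${\cal N}$ arranges. As an alternative, more hands-on route one could instead differentiate the Dunford representation \eqref{lnomco} of $\ln({\mathbb I}+\epsilon z W_{\Lambda}(\beta,\omega))$ directly $N$ times in $\omega$, under the trace and under the $\xi$-integral along $\Gamma_{K}$, justifying the differentiations and the interchange of $\partial_{\omega}^{N}$ with $\Tr$ and with $\int_{\Gamma_{K}}$ by means of Corollary~\ref{coro1} and the resolvent bound \eqref{majres}; this produces an explicit contour-integral formula for $\chi_{\Lambda}^{N}$, which is more informative but unnecessary for the analyticity assertion.
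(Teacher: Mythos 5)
Your argument is correct and is exactly the (implicit) reasoning the paper relies on: the paper states this corollary without proof, as an immediate consequence of Theorem~\ref{omanapress2}~(i), and you simply spell out the standard fact that partial derivatives of a jointly holomorphic function of several complex variables are again jointly holomorphic, e.g.\ via the Cauchy integral representation. One small simplification: the shrinkage of ${\cal N}_0$ to ${\cal N}$ is not actually needed, since $\partial_\omega^N P_\Lambda$ is already holomorphic on all of ${\cal N}_0\times K$ (holomorphy of derivatives holds on the original open set, not just on a compactly contained subset); your version is nonetheless valid and the resulting ${\cal N}$ is still a complex neighborhood of the real axis.
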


\subsection{The thermodynamic limit.}
Now assume that the domain $\Lambda$ is obtained by dilating a given set $\Lambda_1\subset \mathbb{R}^3$ which is supposed to be bounded, open, simply connected and with smooth boundary. More precisely:
$$\Lambda_L:=\{\x\in \re ^3:\; \x/L\in \Lambda_1,\quad L>1\}.$$

Assume that the electric potential $V$ belongs to $\mathcal{K}_{loc}$ and
is $\mathbb{Z}^3$ periodic, and
denote its elementary cell with $\Omega$. We also assume that the
magnetic potential $\A$ can be written as $\A_c+\A_p$, where $\A_c$
is the symmetric  gauge given by a constant magnetic field (thus has a
linear growth), while $|\A_p|^2$ belongs to $\mathcal{K}_{loc}$ and is
$\mathbb{Z}^3$ periodic. Let
$\chi_\Omega$ denote the characteristic function of the elementary
cell.

Let $\omega \in \re$. Introduce the integrated density of states (IDS)  defined  as the  following thermodynamic limit if it exists. Let $E \in \re$ and $ N(E)$ denotes the number of eigenvalues of the operator $H_\Lambda (\omega,V)$ smaller than $E$ counting with their multiplicity.
\begin{equation}
\label{ids0}
\rho(E) :=  \lim_{L\to \infty} \frac{N(E)} { \vert \Lambda_L\vert }.
\end{equation}
Let  $P(I)$  the spectral projector associated with the operator $H_\infty(\omega,V)$ on the interval $I$. Then we have \cite{DoIwMi, If}
\begin{proposition} \label{idsp}
Under the condition stated above then the IDS of $H_\infty(\omega,V)$  exists  and for almost all
$E \in \re$ \begin{equation}
\label{ids}
\rho(E) :=  \frac{1}{\vert  \Omega\vert} \rm{Tr}_{L^{2}(\mathbb{R}^{3})}( \chi_\Omega \rm{P}( E))
\end{equation}
where $\rm{P}(E):= \rm{P}(-\infty,E]$.
\end{proposition}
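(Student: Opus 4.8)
The plan is to establish the existence of the thermodynamic limit for the IDS of $H_\infty(\omega,V)$ and to identify the limit with the trace formula \eqref{ids}. The standard strategy is a Laplace-transform (or resolvent) argument: rather than dealing with the counting function $N(E)$ directly (which is discontinuous), I would first prove convergence of the regularized quantities $\frac{1}{|\Lambda_L|}\mathrm{Tr}_{L^2(\Lambda_L)} e^{-\beta H_{\Lambda_L}(\omega,V)}$ as $L\to\infty$, for every $\beta>0$, and then invoke a Tauberian theorem to transfer convergence back to $N(E)$ for almost every $E$. Lemma \ref{ n12W} guarantees that these traces are finite and controls them uniformly in $\Lambda$, so the objects are well defined along the sequence.

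The key steps, in order, are: (1) Write $\frac{1}{|\Lambda_L|}\mathrm{Tr}_{L^2(\Lambda_L)} e^{-\beta H_{\Lambda_L}(\omega,V)} = \frac{1}{|\Lambda_L|}\int_{\Lambda_L} G_{\Lambda_L}(\x,\x,\beta,\omega)\,d\x$ using the continuous integral kernel from the proof of Lemma \ref{ n12W}. (2) Compare the finite-box kernel $G_{\Lambda_L}(\x,\x,\beta,\omega)$ with the infinite-volume kernel $G_\infty(\x,\x,\beta,\omega)$: by the monotonicity/domination properties of Dirichlet semigroups and the Gaussian off-diagonal bound \eqref{Pnsm}, the two diagonal kernels agree up to an error that decays (exponentially) in the distance from $\x$ to $\partial\Lambda_L$. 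Since the boundary layer of $\Lambda_L$ has volume $O(L^2)=o(|\Lambda_L|)$ and the error is uniformly bounded there, the normalized trace converges to $\frac{1}{|\Omega|}\int_\Omega G_\infty(\x,\x,\beta,\omega)\,d\x = \frac{1}{|\Omega|}\mathrm{Tr}_{L^2(\rt)}(\chi_\Omega e^{-\beta H_\infty(\omega,V)})$, where the last equality uses the $\mathbb{Z}^3$-periodicity of $V$ and $\A_p$ (so the diagonal kernel is periodic) together with Bloch–Floquet theory. (3) Observe $\frac{1}{|\Omega|}\mathrm{Tr}_{L^2(\rt)}(\chi_\Omega e^{-\beta H_\infty(\omega,V)}) = \int_\re e^{-\beta E}\,d\tilde\rho(E)$ for a well-defined nondecreasing function $\tilde\rho$ (the periodicity makes $\chi_\Omega P(E)$ trace class with the right growth). (4) Apply a Tauberian argument (Karamata, or monotonicity plus a density argument) to conclude that $N(E)/|\Lambda_L|\to\tilde\rho(E)$ at every continuity point of $\tilde\rho$, hence almost everywhere, and that $\tilde\rho(E)=\rho(E)=\frac{1}{|\Omega|}\mathrm{Tr}_{L^2(\rt)}(\chi_\Omega P(-\infty,E])$; this gives \eqref{ids}.

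The main obstacle is step (2): controlling the boundary effects, i.e., showing $\frac{1}{|\Lambda_L|}\int_{\Lambda_L}\bigl(G_\infty(\x,\x,\beta,\omega)-G_{\Lambda_L}(\x,\x,\beta,\omega)\bigr)d\x\to 0$. This requires a quantitative comparison of Dirichlet and whole-space heat kernels for singular magnetic and electric potentials. The clean way is to use the diamagnetic domination to reduce to the scalar case $H_{\Lambda_L}(0,V)$ versus $H_\infty(0,V)$ (since $0\le G_{\Lambda_L}\le G_\infty$ pointwise on the diagonal by the domination of Dirichlet semigroups, and the magnetic kernel is dominated by the nonmagnetic one), and then invoke the known fact that for Kato-class potentials the Dirichlet heat kernel converges locally uniformly to the free one as the domain exhausts $\rt$, with a Gaussian-type bound on the difference in terms of $\mathrm{dist}(\x,\partial\Lambda_L)$. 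The smoothness of $\partial\Lambda_1$ (hence of $\partial\Lambda_L$) ensures the boundary layer has the expected $O(L^2)$ volume. Since the paper defers full details to \cite{Sa}, I would state steps (1)–(4) as above, citing \cite{DoIwMi, If} for the existence of the IDS and its Bloch–Floquet representation, \cite{BrLeMu, Si} for the heat-kernel comparison, and a standard Tauberian reference for the last step, flagging step (2) as the place where the singular-potential hypotheses are genuinely used.
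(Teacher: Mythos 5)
The paper does not actually prove Proposition~\ref{idsp}: the statement is introduced by ``Then we have \cite{DoIwMi, If}'' and the proof is delegated entirely to those references, so there is no internal argument to compare yours against. The heat-kernel-plus-Tauberian route you sketch is, in outline, precisely the method of \cite{If} --- a Feynman--Kac--It\^{o} comparison of Dirichlet and whole-space semigroups followed by a Laplace/Tauberian argument --- while \cite{DoIwMi} supplies the complementary uniqueness statement (independence of boundary conditions), which your step (4) implicitly needs when identifying the limit of the Dirichlet counting function with a boundary-condition-free quantity. So you are aimed in the right direction and citing the right sources.

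There is, however, one genuine gap in step (2). You assert that ``$0\le G_{\Lambda_L}\le G_\infty$ pointwise on the diagonal by the domination of Dirichlet semigroups, and the magnetic kernel is dominated by the nonmagnetic one.'' Domain monotonicity of the heat kernel is a positivity-preservation phenomenon; it holds for the scalar semigroup but there is no a priori ordering $G^{\mathrm{mag}}_{\Lambda_L}(\x,\x,\beta)\le G^{\mathrm{mag}}_\infty(\x,\x,\beta)$ for the complex-valued magnetic semigroup, and the diamagnetic inequality (which dominates the magnetic kernel by the nonmagnetic one in absolute value) does not give you an ordering of the two magnetic diagonals. What does work, and what you should use instead, is a bound on the \emph{difference}: by Feynman--Kac--It\^{o}, $G^{\mathrm{mag}}_\infty(\x,\x,\beta)-G^{\mathrm{mag}}_{\Lambda_L}(\x,\x,\beta)$ is the Brownian-bridge expectation over paths that exit $\Lambda_L$ of a unimodular phase times $e^{-\int V}$, and taking absolute values inside the expectation yields
\[
\bigl|G^{\mathrm{mag}}_\infty(\x,\x,\beta)-G^{\mathrm{mag}}_{\Lambda_L}(\x,\x,\beta)\bigr|\le G^{0,V}_\infty(\x,\x,\beta)-G^{0,V}_{\Lambda_L}(\x,\x,\beta),
\]
with the right-hand side nonnegative and Gaussianly small in $\mathrm{dist}(\x,\partial\Lambda_L)$; with that substitution your boundary-layer estimate goes through. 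A smaller imprecision: since $\A_c$ has linear growth, $H_\infty(\omega,V)$ commutes only with \emph{magnetic} translations, not with ordinary $\mathbb{Z}^3$-translations. The $\chi_\Omega$ trace formula \eqref{ids} is still correct because the diagonal $\x\mapsto G_\infty(\x,\x,\beta)$ is genuinely $\mathbb{Z}^3$-periodic (the magnetic-translation phase cancels on the diagonal), but attributing it to ``Bloch--Floquet theory'' is only accurate when the flux per cell is rational; for general $\omega$ it should be phrased as a consequence of magnetic-translation covariance, as in \cite{DoIwMi, If}.
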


This last proposition allows  us to compute the thermodynamic limit of the pressure. Recall that
we have shown that the pressure at finite volume  for $ \omega \in \re, \beta >0,$ and $ z \in K$
where $K$ is a compact subset of  $ D_\epsilon(e_0)$ can be expressed as:
\begin{align}\label{april10}
P_{\Lambda_L}(\beta,\omega,z,\epsilon)=\frac{i\epsilon}{2\beta\pi |\Lambda_L|}{\rm Tr}_{L^2(\Lambda_L)}\int_{\Gamma_{K}}
 \mathrm{d}\xi\,
\ln {\big(1 + \epsilon z e^{-\beta \xi}} \big)\big(H_{\Lambda_L}(\omega,V) - \xi\big)^{-1}
\end{align}

Let $\omega$  real,  define  \cite{CN2} :
\begin{align}\label{april1}
P_\infty(\beta,\omega,z,\epsilon)
&=\frac{i\epsilon}{2\beta\pi\vert \Omega \vert}{\rm Tr}_{L^2(\re^3)}\int_{\Gamma_{K}}
 \mathrm{d}\xi\,
\ln {\big(1 + \epsilon z e^{-\beta \xi}}
\big)\chi_\Omega\big(H_{\infty}(\omega,V) -
\xi\big)^{-1}\chi_\Omega \nonumber
\end{align}

The above integral defines a trace class operator on $L^2(\re^3)$
because after a use of the resolvent identity we can change the integrand into:
$$(\xi-\xi_0)\ln {\big(1 + \epsilon z e^{-\beta \xi}} \big)\chi_\Omega\big(H_{\infty}(\omega,V) - \xi\big)^{-1}\big(H_{\infty}(\omega,V) - \xi_0\big)^{-1}\chi_\Omega$$
where $\xi_0$ is some fixed and negative enough number. Using the Laplace transform and the properties of the semigroup $e^{-tH_\infty}$ one can prove that $\chi_\Omega\big(H_{\infty}(\omega,V) - \xi\big)^{-1}$ and $\big(H_{\infty}(\omega,V) - \xi_0\big)^{-1}\chi_\Omega$ are Hilbert-Schmidt operators whose norms grow polynomially with $\Re\xi$.

Then  we have:

\begin{theorem} \label{LT}
Let $\omega \in \re$, $ \beta >0$ and $K \subset D_\epsilon(e_0)$ a compact set. Under the  same condition as  above then uniformly in $ z \in K$
 \begin{equation} \label{april1}
\lim_{L\to\infty}P_{\Lambda_L}(\beta,\omega,z,\epsilon) = P_\infty(\beta,\omega,z,\epsilon).
\end{equation}
\end{theorem}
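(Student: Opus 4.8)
The plan is to compare $P_{\Lambda_L}$ and $P_\infty$ by rewriting both as traces of Dunford integrals along the common contour $\Gamma_K$, and then to pass to the limit using the IDS result (Proposition \ref{idsp}) together with the Hilbert--Schmidt decay estimates in $\Re\xi$ that were indicated in the paragraph preceding the theorem. First I would fix a sufficiently negative $\xi_0<e_0'$ and use the first resolvent identity \eqref{fieqres} once, so that in both \eqref{april10} and the definition of $P_\infty$ the integrand becomes
$$(\xi-\xi_0)\,\ln\!\big(1+\epsilon z e^{-\beta\xi}\big)\,\big(H_{\bullet}(\omega,V)-\xi\big)^{-1}\big(H_{\bullet}(\omega,V)-\xi_0\big)^{-1},$$
where $H_\bullet$ is $H_{\Lambda_L}$ (with Dirichlet boundary conditions) or $H_\infty$, respectively, localized by $\chi_\Omega$ in the infinite-volume case. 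Because of \eqref{majres} and the logarithmic factor's exponential decay in $\Re\xi$, the $\xi$-integral converges in trace norm and all the bounds are uniform for $z$ in the compact set $K$; this uniformity is what will give the uniform convergence in $z$ claimed in \eqref{april1}.

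Next I would establish the pointwise (in $\xi\in\Gamma_K$) convergence of the localized finite-volume resolvent to the infinite-volume one in the appropriate Schmidt sense. Concretely, for $\xi\in\Gamma_K$ one shows
$$\frac{1}{|\Lambda_L|}{\rm Tr}_{L^2(\Lambda_L)}\Big\{(\xi-\xi_0)\big(H_{\Lambda_L}(\omega,V)-\xi\big)^{-1}\big(H_{\Lambda_L}(\omega,V)-\xi_0\big)^{-1}\Big\}\xrightarrow[L\to\infty]{}\frac{1}{|\Omega|}{\rm Tr}_{L^2(\re^3)}\Big\{(\xi-\xi_0)\chi_\Omega\big(H_{\infty}(\omega,V)-\xi\big)^{-1}\big(H_{\infty}(\omega,V)-\xi_0\big)^{-1}\chi_\Omega\Big\}.$$
For $\xi$ real and below the spectrum this is exactly the content of Proposition \ref{idsp} rewritten through the Laplace transform \eqref{rsg} (the IDS controls traces of spectral functions of $H_\infty$, and $(H-\xi)^{-1}(H-\xi_0)^{-1}$ is such a function); for general $\xi\in\Gamma_K$ one extends by analyticity in $\xi$, since both sides are analytic in a neighborhood of $\Gamma_K$ and the IDS/Laplace argument pins them down on an interval of the real axis. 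The replacement of the genuine trace over $L^2(\Lambda_L)$ by the $\chi_\Omega$-localized trace is justified by a standard covering/boundary argument: $\Lambda_L$ contains $\sim |\Lambda_L|/|\Omega|$ full translated cells, the Dirichlet bracketing between $H_{\Lambda_L}$ and the direct sum over cells is controlled, and the cells meeting $\partial\Lambda_L$ contribute a surface term of order $|\Lambda_L|^{2/3}$ which is negligible after dividing by $|\Lambda_L|$ --- here one uses that $\Lambda_1$ has smooth boundary and the kernel bound \eqref{Pnsm}.

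Finally I would invoke dominated convergence along $\Gamma_K$: the integrands converge pointwise in $\xi$ by the previous step, and they are dominated uniformly in $L$ and in $z\in K$ by $C\,|\xi-\xi_0|\,|\ln(1+\epsilon z e^{-\beta\xi})|\,\|(H_\bullet-\xi)^{-1}\|\,\|(H_\bullet-\xi_0)^{-1}\|_2^2$, which is integrable on $\Gamma_K$ by \eqref{majres} and the exponential decay of the logarithm; this yields \eqref{april1}, uniformly in $z\in K$. The main obstacle is the second step --- controlling the boundary cells and making the passage from the Dirichlet finite-volume trace to the $\chi_\Omega$-localized infinite-volume trace rigorous at complex $\xi$ --- since the sectoriality of $H_{\Lambda_L}(\omega,V)$ for $\omega\in\re$ is here not an issue ($\omega$ is real) but the non-self-adjointness that one would worry about is absent; the real work is the geometric/Dirichlet-bracketing estimate plus the analytic continuation in $\xi$ off the real axis, both of which follow the scheme already used in \cite{CN2, Sa}.
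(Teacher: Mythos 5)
Your proof takes a genuinely different route from the paper's, and it contains a real gap in the geometric part of the argument.

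The paper's proof is much shorter and leans on a reference for the heavy lifting: it recalls the classical fact (citing \cite{BrCoZa}) that \emph{once the IDS exists}, the thermodynamic limit of the pressure exists and equals the Riemann--Stieltjes integral $-\frac{\epsilon}{\beta}\int_{\re} f'_\xi(\xi,z)\rho(\xi)\,d\xi$. Then Proposition~\ref{idsp} is used to replace $\rho(\xi)$ by the Shubin-type trace $\frac{1}{|\Omega|}\mathrm{Tr}(\chi_\Omega P(\xi))$, an integration by parts and the spectral theorem convert this into $\frac{\epsilon}{\beta|\Omega|}\mathrm{Tr}(\chi_\Omega f(H_\infty,z))$, and a final Dunford representation identifies the answer with the stated $P_\infty$. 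Everything happens with $\xi$ as a real spectral variable; no resolvent comparison on a complex contour is required. Your proposal, by contrast, tries to prove the convergence from scratch by comparing the two Dunford integrands along $\Gamma_K$ and invoking dominated convergence. That is a legitimate alternative strategy --- essentially a re-derivation of what \cite{BrCoZa} proves --- but it forces you to carry the pointwise resolvent convergence to complex energies and to do the geometric work yourself.

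The gap is in your second step, the passage from $\frac{1}{|\Lambda_L|}\mathrm{Tr}_{L^2(\Lambda_L)}$ to $\frac{1}{|\Omega|}\mathrm{Tr}_{L^2(\re^3)}\chi_\Omega\,\cdot\,\chi_\Omega$. You describe it as "a standard covering/boundary argument" based on "Dirichlet bracketing between $H_{\Lambda_L}$ and the direct sum over cells." But the vector potential $\A_c$ grows linearly, so the restrictions of $H_\infty$ to distinct translated cells are \emph{not} copies of each other as written; they become unitarily equivalent only after a gauge transformation, i.e.\ after using the magnetic translation symmetry of $H_\infty$. This is not an optional refinement --- without it the covering argument simply does not close. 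The paper flags this explicitly right after the proof: the reality of $\omega$ is needed precisely because "one extensively uses the gauge invariance of the operators and the fact that $H_\infty$ commutes with the magnetic translations generated by $\mathbb{Z}^3$." Your write-up never mentions gauge invariance or magnetic translations. Moreover, Dirichlet bracketing is an order inequality for quadratic forms of self-adjoint operators; it controls eigenvalue counting functions (and hence the IDS on the real axis), but it is not directly a tool for estimating traces of resolvents at complex $\xi\in\Gamma_K$. If you want to work on the contour you must either (a) do the equidistribution argument at real $\xi$ and propagate to complex $\xi$ via a Vitali/Montel argument using the uniform bound \eqref{majres} (you gesture at analytic continuation but do not set up the normal-family step), or (b) work directly with heat-kernel/Combes--Thomas decay bounds as in \cite{CN2}. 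Either way, the magnetic gauge issue must be confronted head-on; your sketch leaves that part unjustified.
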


\proof Define $ (\xi,z) \mapsto f( \xi, z)= f(\xi,\beta, z, \epsilon) := \ln{\big(1 + \epsilon z e^{-\beta \xi}} \big)$. First recall the well known formula (see e.g. \cite{BrCoZa}) which holds if the IDS exists
$$ \lim_{L\to\infty}P_{\Lambda_L}(\beta,\omega,z,\epsilon) = P_\infty(\beta,\omega,z,\epsilon); \quad
P_\infty(\beta,\omega,z,\epsilon) =- \frac{\epsilon}{\beta} \int _{\re} f'_\xi (\xi,z) \rho(\xi) d\xi $$
Then by using  Proposition \ref{idsp} and the fact that $ \{\chi_\Omega \rm{P}(E), E \in \re\}$ is a family of trace class operators we get
$$P_\infty(\beta,\omega,z,\epsilon) =-
\frac{\epsilon}{\beta\vert  \Omega\vert} \int _{\re} f'_\xi (\xi,z)  {\rm Tr}_{L^2(\re^3)}( \chi_\Omega \rm{P}(\xi))d\xi  =
 \frac{\epsilon}{\beta\vert  \Omega\vert} {\rm Tr}_{L^2(\re^3)}\bigg(\chi_\Omega \int _{\re} f(\xi,z)  d\rm{P}(\xi)\bigg) .$$
 So by the spectral theorem
$$P_\infty(\beta,\omega,z,\epsilon) =  \frac{\epsilon}{\beta\vert  \Omega\vert}  {\rm Tr}_{L^2(\re^3)}( \chi_\Omega f(H_\infty,z))$$
and then by using again the Dunford integral representation \cite{DS} the theorem follows. \qed
\newline

The fact that $\omega$ must be real is an important ingredient of the
proof of \eqref{april1} where one extensively uses the gauge
invariance of the operators and the fact that $H_\infty$ commutes with
the magnetic translations generated by $\mathbb{Z}^3$. It is shown in \cite{Sa} that if $\A_c=0$ i.e. the magnetic vector potential is periodic, then  the limit in \eqref{april1} holds true for every $\omega$ is a small ball around every $\omega_0\in\re$, provided that $z$ and $\beta$ are fixed. The explanation is that the analyticity ball in $\omega$ which we have constructed for each $P_{\Lambda_L}$ would be independent of $L$. If $\mathcal{C}_r(\omega_0)$ denotes the positively oriented circle with radius $r$ and center at $\omega_0$, then for any real $\omega$ inside $\mathcal{C}_r(\omega_0)$ and for $r$ small enough we can write:
$$P_{\Lambda_L}(\omega)=\frac{1}{2\pi i}\int_{\mathcal{C}_r(\omega_0)}\frac{P_{\Lambda_L}(\omega')}{\omega'-\omega}d\omega',\quad
\chi_{\Lambda_L}^N(\omega)=\frac{N!}{2\pi i}\int_{\mathcal{C}_r(\omega_0)}\frac{P_{\Lambda_L}(\omega')}{(\omega'-\omega)^{N+1}}d\omega'$$
The last integral representation of $\chi_{\Lambda_L}^N(\omega)$ tells us
that if the pressure admits the thermodynamic limit, the same property
holds true for all generalized susceptibilities.
Thus the existence of the thermodynamic limit of the generalized susceptibilities
follows easily if there is no linear growth in the magnetic
potential generated by the magnetic field.

If $\A_c$ is not zero, then the above argument breaks down because $r$ (the
analyticity radius in $\omega$ of $P_{\Lambda_L}$) goes to zero with
$L$. In fact one cannot hope to prove in general that $P_\infty$ is
real analytic in $\omega$, although one can prove that it  is smooth in
$\omega\in\re$. In order to achieve that, one needs to use
the magnetic perturbation
theory methods developed in \cite{CN2, BrCoLo1, BrCoLo2,
  BrCoLo3}. Complete proofs will be given in \cite{Sa}.
\subsection{The canonical ensemble.}
Let $\rho_{0} > 0$ be the density of particles. The number of particles in the finite box $\Lambda$ is $N_{\Lambda} = \rho_0\vert \Lambda \vert$.
For $\beta > 0$, $\omega_{0} \in \re$ and $\rho_{0} > 0$ fixed, define the finite volume Helmholtz free energy (see \cite{Hu}) as
\begin{equation}
\label{freef'}
f_{\Lambda}(\beta,\rho_{0},\omega_{0},\epsilon) := - \frac{1}{\beta} \ln \big( Z_{\Lambda}(\beta,\rho_{0},\omega_{0},\epsilon)\big)
\end{equation}
where $Z_{\Lambda}(\beta,\rho_{0},\omega_{0}) > 0$ stands for the canonical partition function.\\

As a   consequence of  Theorem \ref{omanapress2}, we have:
\begin{follow} \label{fre} Let $\beta > 0$ and  $\rho_{0} > 0$. Then there exists a complex neighborhood of the real axis $\mathcal{M}$ such that the the Helmholtz free energy $\mathcal{M} \owns \omega \mapsto f_{\Lambda}(\beta,\rho_{0},\omega,\epsilon)$ is analytic.
\end{follow}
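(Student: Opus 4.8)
The plan is to recover the canonical partition function as a Cauchy coefficient of the grand-canonical one and then to inherit $\omega$-analyticity from Theorem~\ref{omanapress2}(i). The starting point is the standard identity linking the two ensembles: for $\omega\in\re$ the grand-canonical partition function is $\Xi_\Lambda(\beta,\omega,z,\epsilon)=e^{\beta|\Lambda|P_{\Lambda}(\beta,\omega,z,\epsilon)}$, and, assuming $N_\Lambda=\rho_0|\Lambda|\in\N$, the canonical partition function $Z_\Lambda$ is the $N_\Lambda$-th Taylor coefficient of $z\mapsto\Xi_\Lambda$ at the origin. Since $0\in{\bf D}_\epsilon$ and ${\bf D}_\epsilon$ is open, I would fix $\rho>0$ and an open set $K$ with $\overline K$ compact, $\overline{{\cal B}(0,\rho)}\subset K$ and $\overline K\subset{\bf D}_\epsilon$, and set $r:=\rho/2$; then for real $\omega$
\begin{equation}\label{canform}
Z_\Lambda(\beta,\rho_0,\omega,\epsilon)=\frac{1}{2\pi i}\oint_{|z|=r}\frac{\Xi_\Lambda(\beta,\omega,z,\epsilon)}{z^{N_\Lambda+1}}\,dz,
\end{equation}
the integral being independent of $r$ in the admissible range.

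Next I would invoke Theorem~\ref{omanapress2}(i) for this $K$ to obtain a complex neighborhood $\mathcal N$ of $\re$ on which $(\omega,z)\mapsto P_{\Lambda}(\beta,\omega,z,\epsilon)$, and hence also $(\omega,z)\mapsto\Xi_\Lambda(\beta,\omega,z,\epsilon)$, is analytic. Then the integrand in \eqref{canform} is jointly continuous on $\mathcal N\times\{|z|=r\}$ and, for each fixed $z$ on the circle, analytic in $\omega\in\mathcal N$; since the circle is compact, differentiation under the integral sign (equivalently, Morera's theorem combined with Fubini) shows that the right-hand side of \eqref{canform} is analytic in $\omega$ on all of $\mathcal N$. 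By uniqueness of analytic continuation it coincides with $Z_\Lambda(\beta,\rho_0,\cdot\,,\epsilon)$ for real $\omega$, so $Z_\Lambda$ extends analytically to $\mathcal N$.

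The last step is to pass to $f_\Lambda=-\beta^{-1}\ln Z_\Lambda$. For real $\omega$ the number $Z_\Lambda(\beta,\rho_0,\omega,\epsilon)$ is a genuine partition function --- a convergent sum of positive exponentials of the $N_\Lambda$-particle energy levels on the bosonic (resp. fermionic) subspace --- hence strictly positive. By continuity, about each $\omega_0\in\re$ there is a ball ${\cal B}(\omega_0,s_{\omega_0})\subset\mathcal N$ on which $Z_\Lambda$ does not vanish; I would then take $\mathcal M:=\bigcup_{\omega_0\in\re}{\cal B}(\omega_0,s_{\omega_0})$, shrinking the radii if necessary so that $\mathcal M$ is a tubular neighborhood of $\re$ that deformation retracts onto $\re$ and is thus simply connected. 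On $\mathcal M$ the function $Z_\Lambda$ is analytic and nowhere zero, so it admits a unique analytic logarithm that is real on $\re$, and $\mathcal M\owns\omega\mapsto f_\Lambda(\beta,\rho_0,\omega,\epsilon)$ is analytic, which is the assertion.

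I expect the only genuinely delicate point to be this last step: since $\inf\sigma(H_\Lambda(\omega,V))$ grows without bound as $|\omega|\to\infty$ for real $\omega$, the function $Z_\Lambda$ is not bounded away from zero uniformly over the non-compact real axis, so $\mathcal M$ cannot be chosen of uniform width and the branch of $\ln Z_\Lambda$ has to be glued together over the cover $\{{\cal B}(\omega_0,s_{\omega_0})\}$. The gluing is consistent because $\mathcal M$ is connected and, on the real segment contained in any nonempty intersection ${\cal B}(\omega_0,s_{\omega_0})\cap{\cal B}(\omega_1,s_{\omega_1})$, all the local branches agree with the real logarithm of $Z_\Lambda$; everything else is a routine transcription of Theorem~\ref{omanapress2}(i).
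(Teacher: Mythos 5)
Your proposal is correct and follows essentially the same route as the paper: express $Z_\Lambda$ as the Cauchy-coefficient contour integral of the grand-canonical partition function, inherit $\omega$-analyticity of $Z_\Lambda$ from Theorem~\ref{omanapress2}(i), and then use positivity of $Z_\Lambda$ on the real axis to extract an analytic branch of $\ln Z_\Lambda$ on a complex neighborhood. The only cosmetic difference is the final step: the paper shrinks $\mathcal M$ so that $\Re Z_\Lambda>0$ there, which lets one use the principal logarithm directly, whereas you shrink $\mathcal M$ to a simply connected tube and glue local branches; both are valid and amount to the same thing.
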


\begin{proof}
For all $\omega_{0} \in \re$, the canonical partition function is related to the grand-canonical pressure by (see \cite{C})
\begin{equation}
\label{freef}
Z_{\Lambda}(\beta,\rho_{0},\omega_{0},\epsilon) := \frac{1}{2i\pi} \int_{\mathcal{C}} dz \frac{1}{z} \bigg[\frac{\exp\big(\frac{\beta}{\rho_{0}} P_{\Lambda}(\beta,\omega_{0},z,\epsilon)\big)}{z}\bigg]^{N_{\Lambda}}
\end{equation}
where $\mathcal{C}$ is a closed contour around $0$ and included in the analyticity domain ${\bf D}_{\epsilon}$ of  the function $z \to P_{\Lambda}(\beta,\omega_{0},z,\epsilon)$. From  Theorem \ref{omanapress2}, there exists a complex neighborhood ${\cal M'}$ of the real axis such that ${\cal M'} \owns \omega \mapsto Z_\Lambda(\beta,\rho_{0},\omega,\epsilon)$ is analytic. Since $Z_\Lambda(\beta,\rho_{0},\omega_{0},\epsilon) > 0$ for all $\omega_{0} \in \re$, then by a continuity argument, there exists a complex neighborhood ${\cal M}$ of the real axis such that for all $\omega \in {\cal M}$, $\Re Z_\Lambda(\beta,\rho_{0},\omega,\epsilon) > 0$. Then the corollary follows.
\end{proof}

For $\beta > 0$, $\rho_{0} > 0$ and $\omega_0 \in \re$, the canonical generalized susceptibilities at finite volume are defined by
\begin{equation}
\label{susceptic}
m_\Lambda^N(\beta,\rho_{0},\omega_0,\epsilon) := -\frac{1}{\vert \Lambda \vert} \bigg(\frac{e}{c}\bigg)^{N}\frac{\partial^N f_\Lambda}{\partial \omega^N}(\beta,\rho_{0},\omega_0,\epsilon),\quad N \in \mathbb{N}^*
\end{equation}

By   Corollary \ref{fre}, $m_\Lambda^N(\beta,\rho_{0},\omega_0,\epsilon) $ are well defined. Moreover:

\begin{follow} Let $\beta > 0$, $\rho_{0} > 0$ and $N \geq 1$. Then there exists a complex neighborhood ${\cal M}$ of the real axis such that ${\cal M} \owns \omega \mapsto m_\Lambda^N(\beta,\rho_{0},\omega,\epsilon)$ is analytic.
\end{follow}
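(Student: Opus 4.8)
The plan is to deduce the statement directly from Corollary \ref{fre} together with the elementary fact that a function holomorphic on an open subset of $\C$ has holomorphic derivatives of every order on the same set.

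First I would invoke Corollary \ref{fre}: it provides a complex neighborhood $\mathcal{M}$ of the real axis on which $\omega \mapsto f_{\Lambda}(\beta,\rho_{0},\omega,\epsilon)$ is analytic. We may assume $\mathcal{M}$ is open. Then, for each fixed $\omega_{0}\in\mathcal{M}$ and each $N\geq 1$, the Cauchy integral formula gives
\[
\frac{\partial^{N} f_{\Lambda}}{\partial \omega^{N}}(\beta,\rho_{0},\omega_{0},\epsilon)
= \frac{N!}{2\pi i}\int_{\mathcal{C}_{r}(\omega_{0})}\frac{f_{\Lambda}(\beta,\rho_{0},\omega',\epsilon)}{(\omega'-\omega_{0})^{N+1}}\,d\omega',
\]
where $\mathcal{C}_{r}(\omega_{0})$ is a sufficiently small positively oriented circle around $\omega_{0}$ contained in $\mathcal{M}$. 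Since the integrand is jointly continuous in $(\omega_{0},\omega')$ and holomorphic in $\omega_{0}$ for $\omega_{0}$ in a neighborhood of the center, differentiation under the integral sign is justified, and it follows that $\omega \mapsto \partial^{N} f_{\Lambda}/\partial \omega^{N}(\beta,\rho_{0},\omega,\epsilon)$ is itself analytic on $\mathcal{M}$.

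Finally, by the definition \eqref{susceptic}, the canonical generalized susceptibility $m_{\Lambda}^{N}(\beta,\rho_{0},\omega,\epsilon)$ is obtained from $\partial^{N} f_{\Lambda}/\partial \omega^{N}(\beta,\rho_{0},\omega,\epsilon)$ by multiplication with the constant $-\frac{1}{\vert \Lambda \vert}(e/c)^{N}$, hence it is analytic on the same neighborhood $\mathcal{M}$. There is no genuine obstacle in this argument: the one point that has to be checked — that passing to an $\omega$-derivative preserves holomorphy — is standard complex analysis, so the corollary is an immediate consequence of Corollary \ref{fre}. The only mild care needed is to record that $\mathcal{M}$ can be taken to be the \emph{same} neighborhood as in Corollary \ref{fre}, independently of $N$, which is clear since all derivatives of a holomorphic function live on its full domain of holomorphy.
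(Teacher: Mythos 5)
Your proof is correct and follows exactly the argument the paper leaves implicit: the paper states the corollary immediately after defining $m_\Lambda^N$ in \eqref{susceptic} without supplying a separate proof, precisely because, as you observe, a holomorphic function has holomorphic derivatives of all orders on its full domain, so analyticity of $f_\Lambda$ on $\mathcal{M}$ from Corollary \ref{fre} carries over to $\partial^N f_\Lambda/\partial\omega^N$ and hence to its constant multiple $m_\Lambda^N$. Your invocation of the Cauchy integral formula is just the standard way of making this explicit, and your remark that $\mathcal{M}$ can be taken independent of $N$ is accurate.
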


\vspace{0.5cm}

\noindent{\bf Acknowledgments}. This paper is dedicated to the memory of our colleague and friend Pierre Duclos (1948-2010). Part of this work has been done while B.S. was visiting Aalborg. H.C. acknowledges partial support from
the Danish F.N.U. grant {\it Mathematical Physics}.

 \end{document}